\begin{document}

\title[The Impact of External Sources on the Friedkin--Johnsen Model]{The Impact of External Sources on the Friedkin--Johnsen Model}

\author{Charlotte Out}
\authornote{Equal contribution.}
\affiliation{%
  \institution{University of Cambridge}
  \country{United Kingdom}
  }
\email{ceo33@cam.ac.uk}

\author{Sijing Tu}
\authornotemark[1]
\affiliation{%
  \institution{KTH Royal Institute of Technology}
  \country{Sweden}
}
\email{sijing@kth.se}

\author{Stefan Neumann}
\authornote{Equal senior authorship.}
\affiliation{%
 \institution{TU Wien}
 \country{Austria}
}
\email{stefan.neumann@tuwien.ac.at}

\author{Ahad N. Zehmakan}
\authornotemark[2]
\affiliation{%
  \institution{Australian National University}
  \country{Australia}
}
\email{ahadn.zehmakan@anu.edu.au}

\renewcommand{\shortauthors}{Charlotte Out, Sijing Tu, Stefan Neumann, and Ahad N. Zehmakan}

\begin{abstract}
 To obtain a foundational understanding of timeline algorithms and viral content in shaping public opinions, computer scientists started to study augmented versions of opinion formation models from sociology. 
In this paper, we generalize the popular Friedkin--Johnsen model to include the effects of external media sources on opinion formation. Our goal is to mathematically analyze the influence of biased media, arising from factors such as manipulated news reporting or the phenomenon of false balance.
Within our framework, we examine the scenario of two opposing media sources, which do not adapt their opinions like ordinary nodes, and analyze the conditions and the number of periods required for radicalizing the opinions in the network. %
When both media sources possess equal influence, we theoretically characterize the final opinion configuration. 
In the special case where there is only a single media source present, we 
prove that media sources which do not adapt their opinions are significantly more powerful than those which do. Lastly, we conduct the experiments on real-world and synthetic datasets, showing that our theoretical guarantees closely align with experimental simulations.
\end{abstract}

\begin{CCSXML}
<ccs2012>
   <concept>
       <concept_id>10003752.10010070</concept_id>
       <concept_desc>Theory of computation~Theory and algorithms for application domains</concept_desc>
       <concept_significance>500</concept_significance>
       </concept>
   <concept>
       <concept_id>10002951.10003260.10003282.10003292</concept_id>
       <concept_desc>Information systems~Social networks</concept_desc>
       <concept_significance>500</concept_significance>
       </concept>
 </ccs2012>
\end{CCSXML}

\ccsdesc[500]{Theory of computation~Theory and algorithms for application domains}
\ccsdesc[500]{Information systems~Social networks}

\keywords{Friedkin--Johnsen Model; opinion formation; false balance; social networks}

\maketitle

\section{Introduction}\label{sec:Introduction}

Understanding the impact of social media and of conventional media (such as TV or
newspapers) on modern societies has been an active research topic of the last
decade.  This has led to a large body of empirical work, which obtains real-world data
from various media sources and analyzes it to
obtain insights into societal phenomena \cite{myers2012information}.

To gain an enhanced theoretical understanding, computer scientists have recently started to study opinion formation models from sociology.
They augment these models with abstractions of how online social networks impact
the opinion formation process, for instance, when the social network provider
(like Facebook or X, previously known as Twitter) aims to minimize the
disagreement between the users~\cite{chitra2020analyzing}
or in the context of confirmation bias and friend-of-friend
recommendations~\cite{bhalla2023local}.

One shortcoming of previous works is that they usually analyze graphs based on online social networks
without any external influence. However, we argue that the discussion on online
platforms like X or Reddit is also influenced by external sources, such
as conventional media (like newspapers or TV), which are often consumed
independent of online social networks.

The influence of these external media sources on the opinion formation process can be particularly stark in the event that the media source is \emph{biased}, i.e., it does not truthfully reflect the
average of the opinions. External media sources can be biased, for instance, due to phenomena like \emph{false balance} or \emph{bothsidesism}, where media aim to present
both sides of a conflict but unintentionally overrepresent one side of the
conflict. This received attention when media gave too much attention to doubters of climate
change~\cite{boykoff2004balance}. Bias can also be the result of intentional manipulations. For example, the Polish PiS party, which formed a majority government from $2015$ to $2023$, has been blamed for engineering the media coverage to support their own
agenda~\cite{poland}. As exposure to this biased information can lead to negative societal outcomes, including group polarization, intolerance of dissent, and political segregation~\cite{spohr2017fake}, it is critical to gain an enhanced understanding of the effect of (biased) external media sources on the opinion formation process.

\spara{Our contributions.}
In this paper, we investigate how
much impact external sources can have on the opinion-formation process in a
social network. In particular, we propose an augmented version of the popular
Friedkin--Johnsen (FJ) model~\cite{friedkin1990social}, in which one or two
external sources are added (see \cref{sec:StubbornMediaSourcesSetting} for
details).
We use our model to study the setting in which the external media sources are
\emph{biased}, i.e., they do not report the average of the individuals' opinions, but
their opinion is skewed
into a direction. This allows us to analyze how phenomena like false balance or biased media can push a population’s average opinion towards a certain direction. 

We provide theoretical upper and lower bounds
on how much one and two external sources can influence the average opinion in
the network (see \cref{thm:twomediaoneround}). We then argue that our bounds are tight. More precisely, we show that for regular graphs, our upper and lower bounds match, i.e., that for regular graphs, our analysis is exact. Moreover, we experimentally observe that on real-world social networks (SN), such as Facebook, and synthetic graph models, such as Barabási-Albert (BA) graphs~\cite{barabasi_PA_graph}, the growth/decrease of the average opinion and the time it takes to reach either radicalized average opinions (close to~$0$ or close to~$1$) is similar to the theoretical bounds provided in~\cref{thm:twomediaoneround} and \cref{lem:numroundsphase1multimedia}.

We also study a setting of \emph{repeated} influence
from two external sources, in which we consider multiple periods of
opinion convergence, after each of which the two external sources update their
bias. In the case of two media sources in which one is stronger than the other,
we show that in regular graphs even a constant number of periods suffices to
radicalize almost all opinions in the network (see \cref{lem:numroundsphase1multimedia}). Interestingly, we experimentally show that in the regular graphs, this discrepancy in strength between the media sources can be as small as one media source being connected to one more node than the other media source to achieve this radicalization of opinions.
For two equally influential external sources, we specify the nodes' final opinions
exactly for regular graphs (see \cref{lemma:twomediaoneroundalphahalf}). In particular, this proposition states that in this setting the total sum of opinions stays unchanged over time, which we additionally verify experimentally.

Furthermore, we give results that differentiate between stubborn and
non-stubborn external sources. Here, we say that an external source is
\emph{stubborn} if it does not participate in the opinion formation process, and
it is \emph{non-stubborn} if it updates its opinion like any other node. We show
that non-stubborn external sources can have only very small impact, whereas the
impact of stubborn external sources (that we study in the rest of the paper) is
significantly higher (see \cref{prop:singlemedia-nonfixed}).
This suggests that it is essential that the media face public scrutiny and peer pressure, to avoid them from (deliberately or unintentionally) biasing a population's average opinion.

\subsection{Related Work} 

Studying opinion formation models and their properties has been an active area
of research for at least two decades in the computer science literature. Some of
the most well-established models are the threshold
model~\cite{kempe2003maximizing,tran2022heterogeneous}, the majority
model~\cite{brill2016pairwise,zehmakan2021majority,grandi2023identifying}, and
the voter model~\cite{petsinis2023maximizing}. In this paper, we focus on the popular FJ model.%

The paper most closely related to ours is by Gionis, Terzi and Tsaparas \cite{gionis2013opinion}, who
considered the problem of identifying the best $k$ individuals in a network such
that if their expressed opinions are fixed to $1$, the sum of expressed opinions
is maximized; this was motivated, e.g., by marketing campaigns.
They found that this problem is NP-hard, but since the objective function is
monotone and submodular, this problem admits a greedy $(1 - \frac{1}{e})$-approximation algorithm. In this paper, we also consider the sum of opinions in
a network and fix the expressed opinions of some nodes. The main difference is
that the results in \cite{gionis2013opinion} are algorithmic, whereas here we are
interested in obtaining analytic bounds on how much the opinions can change.
Therefore, the techniques developed by Gionis, Terzi and Tsaparas do not apply in our
setting.

Abebe et al. \cite{abebe2018opinion} also studied a variation of the FJ model in which each
individual has a resistance or stubbornness parameter measuring the individuals'
propensity for changing their opinion. They consider the problem of how to
change the individual's resistances to maximize (or minimize) the sum of
opinions.

Musco, Musco and Tsourakis \cite{musco2018minimizing} considered the problem of minimizing
the polarization--disagreement index in the FJ model and showed that their objective
function is convex which yields an exact polynomial-time
algorithm. Zhu, Bao and Zhang \cite{zhu2021minimizing} studied a similar problem with the goal of
adding a small number of edges to the network, and showed that the
objective function of their problem is not submodular, although it is monotone.
A similar problem based on changing the opinions of a small number of nodes was
considered by Makos, Terzi and Tsaparas \cite{matakos2017measuring}.

Chen and Racz \cite{chen2020network} and Gaitonde, Kleinberg and Tardos \cite{gaitonde2020adversarial} considered the impact
of adversaries on the FJ model, who aim to maximize the polarization and
disagreement in the network. They derived analytic bounds on the maximum
possible impact of adversaries and also considered the underlying algorithmic
problems. This was extended to adversaries with limited
information by Tu, Neumann and Gionis ~\cite{tu2023adversaries}.

Recently, several works have empirically analyzed the influence of an external media source in different opinion dynamics models. Crokidakis \cite{crokidakis2012effects} and Nazeri \cite{nazeri2018effect} considered the influence of such external mass media by means of Monte Carlo simulations in the two-dimensional Sznajd model \cite{sznajd2000opinion} and Muslim et al. \cite{muslim2024mass} in the voter model. Pineda and Buendía \cite{PINEDA201573} experimentally analyze the effect of an external media source in the Hegselmann and Krause model \cite{rainer2002opinion}. Lastly, Auletta, Coppola and Ferraioli \cite{auletta2021impact} and Candogan \cite{candogan2022social} considered the setting in which the social media platform itself gives news recommendations to the users, targeting at maximizing user activity on the platform, in the DeGroot model~\cite{DeGroot}. %

Apart from external media sources, different forms of (external) bias in opinion dynamics have been considered. The inclusion of stubborn agents (agents with a bias towards a specific opinion) and zealots (agents that never deflect from their initial opinion) has by considered extensively (cf. ~\cite{auletta2017information,mobilia2007role,galam2007role}). Moreover, the scenario in which there is a bias towards a certain (superior) alternative is studied in the majority by Anagnostopoulos \cite{anagnostopoulos2022biased} and in the voter model by Berenbrink et al. \cite{berenbrink2016bounds}. Lastly, Wilder and Vorobeychik \cite{wilder2018controlling} and Corò et al. \cite{coro2022exploiting} consider a variant of influence maximization problem in the Independent Cascade and Linear Threshold model respectively, where they have the budget to convince $k$ nodes initially of some preferred opinion. These $k$ nodes in turn start spreading messages which biases the nodes who receive the message towards this preferred opinion. 

\section{Preliminaries}

\spara{Graph Notation.} Throughout this paper, we let $G=(V, E, w)$ be an
undirected weighted graph which represents a social network. We set $n = \abs{V}$
and $w \colon E \rightarrow \R_{>0}$. %
For a node $i\in V$, $N\left(i\right):=\{v\in V: \{v,i\} \in E\}$ is the
\emph{neighborhood} of $i$, and $d_i:= \sum_{j\in N(i)} w_{ij}$ is the
\emph{degree} of $i$. We let $d_{\max}$ and $d_{\min}$ denote the
maximum degree and minimum degree in $G$, respectively. The weighted adjacency
matrix $\Adj \in \mathbb{R}^{n\times n}$ is defined as $W_{ij}=w_{ij}$ for all
$i,j\in V$. We let $\Diag \in \mathbb{R}^{n\times n}$ be the diagonal degree
matrix given by $D_{ii} = \sum_{j\in N(i)}w_{ij}$ and $0$ off the diagonal.
We let $\laplacian := \Diag - \Adj$ denote the graph Laplacian. We let
$\ID \in \mathbb{R}^{n\times n}$ be the identity matrix, and $\one \in
\mathbb{R}^{n}$ be the vector with $1$ in each entry. All logarithms are natural logarithms unless mentioned otherwise.

\spara{Friedkin--Johnsen opinion dynamics.}
We study opinion dynamics based on the Friedkin--Johnsen (FJ)
model~\cite{friedkin1990social}. The dynamics are specified by a graph
$G=(V,E,w)$, where each node~$i\in V$ has a fixed (private) internal
opinion~$\ebegop_i\in[0,1]$ and a (public) expressed
opinion~$\efinop_i^{(t)}\in[0,1]$, which depends on the time $t\in\mathbb{N}_0$.
Intuitively, one can think of $[0,1]$ as an interval of opinions where $0$ and
$1$ are the two extreme viewpoints, for instance, $0$ corresponds to
the viewpoint that climate change is the most pressing issue of the world and $1$
corresponding to denying climate change. It will be convenient for us to
consider the vectors $\begop \in [0,1]^n$ and $\finop^{(t)}\in[0,1]^n$ of
innate and expressed opinions. Starting with $\finop^{(0)}=\begop$, at each
time step~$t$ all nodes $i \in V$ update their expressed opinions by taking the
weighted average of their neighbors' expressed opinions, as well as their own
innate opinion:
\begin{equation}
\label{eq:FJdynamics}
        \efinop{i}^{(t+1)} = \frac{\ebegop{i} + \sum_{j\in N(i)}w_{ij}\efinop{j}^{(t)}}{1 + \sum_{j\in N(i)}w_{ij}}.
    \end{equation} 
It is worth noticing that in this update rule, we implicitly assume that each
node's innate opinion has unit weight~$1$. The above update rule can be equivalently expressed as:
\begin{equation}
	\label{equation:FJ-one}
	\finop^{(t+1)} = (\ID + \Diag)^{-1}\left(\begop + \Adj \finop^{(t)}\right). 
\end{equation}
It is well-known that for $t\to\infty$, the vector of expressed
\emph{equilibrium}
opinions is given by
\begin{equation}
	\label{equation:FJ-final-one}
	\finop^{*} := \lim_{t\to\infty} \finop^{(t)} = (\ID + \laplacian)^{-1} \begop.
\end{equation}
Interestingly, the sum of innate and of equilibrium opinions is the same (\cref{cor:suminitialsumfinal}). This is well-known in the literature.
\begin{lemma}\label{cor:suminitialsumfinal}
   It holds that $\one^{\intercal} \finop^{*} = \one^{\intercal} \begop$.
\end{lemma}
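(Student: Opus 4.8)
The plan is to exploit the defining algebraic property of the graph Laplacian, namely that its row sums vanish, so that $\one$ is a null vector. Concretely, since $\laplacian = \Diag - \Adj$ with $D_{ii} = \sum_{j\in N(i)} w_{ij}$, each row of $\laplacian$ sums to zero; equivalently $\laplacian \one = \mathbf{0}$, and because $\laplacian$ is symmetric this also gives the left-sided identity $\one^{\intercal}\laplacian = (\laplacian\one)^{\intercal} = \mathbf{0}^{\intercal}$. This single observation is the engine of the whole proof.

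First I would rewrite the equilibrium characterization \eqref{equation:FJ-final-one} in product form as $(\ID + \laplacian)\finop^{*} = \begop$. This rearrangement is legitimate precisely because $\ID + \laplacian$ is invertible: it is symmetric and positive definite, being $\ID$ plus the positive semidefinite matrix $\laplacian$, so in particular $\finop^{*}$ is well-defined. Next I would left-multiply both sides of this identity by $\one^{\intercal}$, yielding $\one^{\intercal}(\ID + \laplacian)\finop^{*} = \one^{\intercal}\begop$.

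Finally I would simplify the left-hand side by linearity together with the null-vector property established above: $\one^{\intercal}(\ID + \laplacian) = \one^{\intercal}\ID + \one^{\intercal}\laplacian = \one^{\intercal} + \mathbf{0}^{\intercal} = \one^{\intercal}$. Substituting this back collapses the left-hand side to $\one^{\intercal}\finop^{*}$, and we obtain $\one^{\intercal}\finop^{*} = \one^{\intercal}\begop$, which is exactly the claim.

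The argument is essentially immediate, so there is no substantive obstacle; the only points worth stating carefully are the zero-row-sum property of $\laplacian$ (which hinges on the symmetry of $G$ and the definition of $\Diag$) and the invertibility of $\ID + \laplacian$, which justifies both the existence of $\finop^{*}$ and the rearrangement into $(\ID + \laplacian)\finop^{*} = \begop$. As a sanity check, the same conclusion can be derived directly from the update rule \eqref{equation:FJ-one} by summing \eqref{eq:FJdynamics} over all nodes and verifying that $\one^{\intercal}\finop^{(t)}$ is invariant under a symmetrization of the averaging weights, but the Laplacian route above is the cleanest.
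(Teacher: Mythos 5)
Your proof is correct and is essentially the canonical argument: the paper states \cref{cor:suminitialsumfinal} without proof (citing it as well-known), and the identity $\one^{\intercal}\laplacian = \mathbf{0}^{\intercal}$ that your argument rests on, combined with the invertibility of $\ID + \laplacian$ from positive semidefiniteness of $\laplacian$, is exactly the mechanism the paper itself invokes in its proof of \cref{lem:sum-opinion-helper}. One minor caveat, not affecting validity: your closing ``sanity check'' is the only shaky remark, since $\one^{\intercal}\finop^{(t)}$ is in general \emph{not} conserved along the FJ trajectory for non-regular graphs (summing \cref{eq:FJdynamics} gives $\one^{\intercal}\finop^{(t+1)} = \one^{\intercal}(\ID+\Diag)^{-1}\bigl(\begop + \Adj\finop^{(t)}\bigr)$, which preserves the sum only in the regular case), so the conservation holds for the limit rather than step-by-step --- but as you only offer this as an aside and do not use it, the proof stands as written.
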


\spara{Convergence properties.} 
We will study several convergence properties of our model, where we apply some additional notation and several lemmas from \cite{berman1994nonnegative} for our proofs. 
Specifically, we will use the characterizations of \emph{nonhomogeneous first-order matrix difference equations}~(Definition~\ref{def:support:matrix-difference}) and properties of \emph{$M$-matrices}~(Definition~\ref{def:support:m-matrix}).

\begin{definition}
\label{def:support:matrix-difference}
    A \emph{nonhomogeneous first-order matrix difference equation} is of the form $\vx^{t+1} = \cH \vx^{t} + \cc$, where $\cH$ is an $n \times n$ matrix, $\cc$ is an $n \times 1$ constant vector, and $\{\vx^{t}\}_{t=1}^{\infty}$ is an infinite sequence of $n \times 1$ vectors. 
\end{definition}
The following lemma characterizes the solution and convergence properties of nonhomogenous first-order matrix difference equations~\cite[Chapter 7, Lemma 3.6]{berman1994nonnegative}.
\begin{lemma}
\label{lem:support:convergence}
Let $\cA = \cM -\cN \in \mathbb{R}^{n\times n}$ be such that both $\cA$ and $\cM$ are nonsingular matrices. 
Let $\cH = \cM^{-1} \cN$ and $\cc = \cM^{-1} \cb$.
The vector $\{\vx^{t}\}_{t=1}^{\infty}$ of the nonhomogenous first-order matrix difference equation $\vx^{t+1} = \cH \vx^{t} + \cc$ 
converges if and only if $\rho(\cH) <1$. 
Here $\rho(\cH)$ denotes the spectral radius of $\cH$.
Moreover, $\lim_{t\rightarrow \infty} \vx^{t} = \cA^{-1} \cb$. 
\end{lemma}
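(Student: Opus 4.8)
The plan is to reduce the entire statement to the asymptotic behaviour of the matrix powers $\cH^t$ by first peeling off the candidate limit. Any limit $\vx^*$ of a convergent solution must be a fixed point, i.e. $\vx^* = \cH \vx^* + \cc$, equivalently $(\ID - \cH)\vx^* = \cc$. The key algebraic observation is that $\ID - \cH = \ID - \cM^{-1}\cN = \cM^{-1}(\cM - \cN) = \cM^{-1}\cA$, a product of two nonsingular matrices and hence itself nonsingular. Consequently $1$ is not an eigenvalue of $\cH$, the fixed point is unique, and solving gives $\vx^* = (\ID - \cH)^{-1}\cc = (\cM^{-1}\cA)^{-1}\cM^{-1}\cb = \cA^{-1}\cb$. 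This already pins down the value of the limit claimed in the lemma, independently of the convergence question.

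Next I would introduce the error $\vx^t - \vx^*$ and subtract the fixed-point equation from the recurrence to obtain $\vx^{t+1} - \vx^* = \cH(\vx^t - \vx^*)$, so that $\vx^t - \vx^* = \cH^{\,t-1}(\vx^1 - \vx^*)$. Convergence of $\{\vx^t\}$ to $\vx^*$ is therefore equivalent to $\cH^{\,t-1}(\vx^1 - \vx^*) \to 0$. For the ``if'' direction, assuming $\rho(\cH) < 1$, I would invoke the standard fact that $\cH^t \to 0$ whenever $\rho(\cH) < 1$; this forces the error to vanish for every initial vector, giving convergence to $\cA^{-1}\cb$.

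For the ``only if'' direction I would use that the iteration is required to converge for every initial vector. Since every limit is a fixed point and the fixed point is unique, all solutions converge to $\vx^*$, whence $\cH^t w \to 0$ for every $w \in \mathbb{R}^n$ (take $w = \vx^1 - \vx^*$ with $\vx^1$ arbitrary). If some eigenvalue $\lambda$ of $\cH$ had $\abs{\lambda} \ge 1$, then applying $\cH^t$ to the real and imaginary parts of a corresponding eigenvector would yield a sequence that does not tend to $0$ (since $\abs{\lambda^t} \ge 1$), a contradiction; hence $\rho(\cH) < 1$.

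The main obstacle is the workhorse equivalence $\cH^t \to 0 \iff \rho(\cH) < 1$ used in the ``if'' direction: the subtlety is that $\cH$ need not be diagonalizable, so a bare bound on the eigenvalues is not immediately enough. I would handle this through the Jordan canonical form $\cH = P J P^{-1}$, reducing to single Jordan blocks whose powers have entries of the form $\binom{t}{j}\lambda^{t-j}$; these tend to $0$ precisely because the geometric decay with ratio $\abs{\lambda} < 1$ dominates the polynomial factor $\binom{t}{j}$. Phrasing the converse in terms of convergence for every initial vector is what lets me avoid a separate, more delicate analysis of the boundary case $\rho(\cH) = 1$ (where an eigenvalue on the unit circle other than $1$ would force oscillation), keeping both directions uniform.
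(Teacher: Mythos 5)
Your proof is correct, but note that the paper has no proof of this statement to compare against: the lemma is quoted as a known result from Berman and Plemmons [Chapter~7, Lemma~3.6], so the paper's ``proof'' is a citation. Your argument is the standard textbook one, with one nice simplification. The usual presentation unrolls the recurrence into $\vx^{t} = \cH^{t}\vx^{0} + \sum_{j=0}^{t-1}\cH^{j}\cc$ and identifies the limit through convergence of the Neumann series $\sum_{j\ge 0}\cH^{j} = (\ID-\cH)^{-1}$; you instead pin down the candidate limit purely algebraically via the factorization $\ID - \cH = \cM^{-1}\cA$ (which is exactly where the hypothesis that $\cA$ is nonsingular earns its keep), subtract the fixed point to get the homogeneous error recursion $\vx^{t}-\vx^{*} = \cH^{t-1}(\vx^{1}-\vx^{*})$, and reduce both directions to the single equivalence $\cH^{t}\to 0 \iff \rho(\cH)<1$, never touching the series. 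Two subtleties you handled well deserve emphasis. First, the ``only if'' direction is false for a single fixed initial vector (start at $\vx^{1}=\vx^{*}$ and the sequence is constant regardless of $\rho(\cH)$; more generally any initial error in the stable subspace converges), so the statement must be read as ``converges for every initial vector,'' which is the intended meaning in Berman--Plemmons; your choice $w = \vx^{1}-\vx^{*}$ with $\vx^{1}$ arbitrary makes that quantifier do exactly the work it must. Second, you correctly flagged that $\rho(\cH)<1$ alone does not control $\cH^{t}$ when $\cH$ is not diagonalizable, and the Jordan-block computation with entries $\binom{t}{j}\lambda^{t-j}$ closes that gap.

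One microscopic point in the ``only if'' direction: for a real matrix $\cH$ with a complex eigenvalue $|\lambda|\ge 1$ and eigenvector $v$, you should state explicitly that if both $\cH^{t}(\mathrm{Re}\,v)\to 0$ and $\cH^{t}(\mathrm{Im}\,v)\to 0$, then $\cH^{t}v = \lambda^{t}v \to 0$, contradicting $\|\lambda^{t}v\| = |\lambda|^{t}\|v\| \ge \|v\| > 0$. This is clearly what you intend and costs one line; with it spelled out, the argument is complete and self-contained, which is arguably a service the paper itself does not provide.
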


\begin{definition}[$M$-matrix]
\label{def:support:m-matrix}
    Let $\cA$ be an $n\times n$ matrix where $a_{i,j} \leq 0$ for all $i \neq j$. 
    Then $\cA$ is an $M$-matrix if it is positive semidefinite, i.e., for any vector $\vx$, it holds that $\vx^{\intercal} \cA 
    \vx \geq 0$.
\end{definition}

Note that the graph Laplacian matrix $\laplacian$ is an $M$-matrix,  as it is positive semidefinite~\cite[Proposition 1]{von2007tutorial} and its off-diagonal elements are non-positive. 
Next,  we provide two properties of a nonsingular $M$-matrices~ \cite[Chapter 6, Theorem 2.3]{berman1994nonnegative}. 
\begin{lemma}
    \label{lem:support:m-matrix:equivalent-class}
    The following two properties hold: 
    \textbf{(a)} If $\cA$ is an $M$-matrix, then $\cA + \cD$ is a nonsingular $M$-matrix for each positive diagonal matrix $\cD$.
    \textbf{(b)} If $\cA$ is a nonsingular $M$-matrix, then it is inverse-positive; that is $\cA^{-1}$ exists and $\cA^{-1} \geq 0$, where the inequality holds elementwise.  
\end{lemma}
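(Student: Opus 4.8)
The plan is to treat the two claims separately, relying only on the two defining features of an $M$-matrix in Definition~\ref{def:support:m-matrix}: the non-positive off-diagonal (Z-matrix) structure and positive semidefiniteness.

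For part \textbf{(a)}, I would first note that adding a diagonal matrix leaves every off-diagonal entry untouched, so $(\cA+\cD)_{ij}=a_{ij}\le 0$ for all $i\neq j$ and $\cA+\cD$ is again a Z-matrix. I would then upgrade positive semidefiniteness to strict positive definiteness: for every $\vx\neq 0$,
\[
  \vx^{\intercal}(\cA+\cD)\vx=\vx^{\intercal}\cA\vx+\vx^{\intercal}\cD\vx\ge \vx^{\intercal}\cD\vx=\sum_i \cD_{ii}x_i^2>0,
\]
using that $\cA$ is PSD and that $\cD$ has strictly positive diagonal. This single estimate yields both conclusions at once: it shows $\cA+\cD$ is PSD (hence, together with the Z-structure, an $M$-matrix), and it shows nonsingularity, since $(\cA+\cD)\vx=0$ would force $\vx^{\intercal}(\cA+\cD)\vx=0$ and therefore $\vx=0$.

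For part \textbf{(b)}, the strategy is the standard Neumann-series argument. I would write $\cA=s\ID-\mathcal{B}$ with $s:=\max_i a_{ii}$ and $\mathcal{B}:=s\ID-\cA$. Because $\cA$ is a Z-matrix, the off-diagonal entries of $\mathcal{B}$ equal $-a_{ij}\ge 0$, while its diagonal entries $s-a_{ii}\ge 0$ by the choice of $s$; hence $\mathcal{B}\ge 0$ entrywise. The crux is to show that $\mathcal{B}/s$ has spectral radius strictly below $1$, equivalently $\rho(\mathcal{B})<s$. For this I would invoke the Perron--Frobenius theorem for nonnegative matrices: since $\mathcal{B}\ge 0$, its spectral radius $\rho(\mathcal{B})$ is itself an eigenvalue of $\mathcal{B}$, so $s-\rho(\mathcal{B})$ is a \emph{real} eigenvalue of $\cA$. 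A PSD matrix has no negative real eigenvalue (if $\cA\mathbf{v}=\mu\mathbf{v}$ with $\mu,\mathbf{v}$ real, then $\mu\|\mathbf{v}\|^2=\mathbf{v}^{\intercal}\cA\mathbf{v}\ge 0$), and since $\cA$ is nonsingular it has no zero eigenvalue; together these force $s-\rho(\mathcal{B})>0$.

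Once $\rho(\mathcal{B}/s)<1$ is in hand, I would conclude via the convergent Neumann series $(\ID-\mathcal{B}/s)^{-1}=\sum_{k=0}^{\infty}(\mathcal{B}/s)^k$: each term is a power of a nonnegative matrix and hence nonnegative, so the sum is nonnegative, and therefore $\cA^{-1}=\tfrac{1}{s}(\ID-\mathcal{B}/s)^{-1}\ge 0$ entrywise. The main obstacle is exactly the spectral bound $\rho(\mathcal{B})<s$; everything else is sign bookkeeping. The delicate point there is that $\cA$ need not be symmetric, so rather than appealing to positive definiteness of $\cA$ directly I would keep the argument pinned to the single \emph{real} eigenvalue $s-\rho(\mathcal{B})$ supplied by Perron--Frobenius, which is all that positive semidefiniteness plus nonsingularity can control cleanly.
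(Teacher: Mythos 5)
Your proof is sound, but there is nothing in the paper to compare it against: the authors do not prove this lemma, they import it verbatim from Berman and Plemmons \cite{berman1994nonnegative} (Chapter 6, Theorem 2.3). So you have supplied a self-contained argument where the paper only cites. Your part (a) is the right argument under the paper's quadratic-form definition of an $M$-matrix: the diagonal shift preserves the Z-structure, and the single estimate $\vx^{\intercal}(\cA+\cD)\vx \geq \sum_i \cD_{ii}x_i^2 > 0$ simultaneously gives positive (semi)definiteness and nonsingularity. Your part (b) is the classical regular-splitting/Neumann-series proof, which is in the same spirit as the textbook treatment, and you correctly identify the delicate point: since the paper's definition does not assume symmetry, positive semidefiniteness of the quadratic form only controls \emph{real} eigenvalues with real eigenvectors, and Perron--Frobenius supplies exactly such a pair --- $\mathcal{B}\geq 0$ has a real nonnegative eigenvector $\mathbf{v}$ for $\rho(\mathcal{B})$, which is then a real eigenvector of $\cA$ with eigenvalue $s-\rho(\mathcal{B})$, so $(s-\rho(\mathcal{B}))\|\mathbf{v}\|^2 = \mathbf{v}^{\intercal}\cA\mathbf{v} \geq 0$ and nonsingularity rules out equality.

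The one step you take for granted is $s>0$: you divide by $s=\max_i a_{ii}$ without verifying positivity. This is easily repaired inside your own framework. Positive semidefiniteness gives $a_{ii}=e_i^{\intercal}\cA e_i \geq 0$, so $s\geq 0$; and if $s=0$, then all diagonal entries vanish, and testing the quadratic form on $e_i+e_j$ yields $a_{ij}+a_{ji}\geq 0$ while the Z-structure forces $a_{ij},a_{ji}\leq 0$, so every off-diagonal entry is zero as well, making $\cA$ the zero matrix and contradicting nonsingularity. Hence $s>0$, and with that one line added your argument is complete.
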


\section{Stubborn media sources with one period}
\label{sec:StubbornMediaSources}

In this section, we examine the impact of stubborn media sources on opinion formation among individuals in social networks. These media sources are
considered stubborn as they keep their (expressed) opinions fixed throughout an entire
period of opinion dynamics, i.e., they do not adhere to the FJ-dynamics in \cref{eq:FJdynamics}. 
We first give an overview of the model and introduce some further notation in \cref{sec:StubbornMediaSourcesSetting}. %

\subsection{Our model}
\label{sec:StubbornMediaSourcesSetting}
We consider the scenario in which two competing external media sources $M$ and $M'$ are added to the social network. In real-world scenarios, $M$ and $M'$ could be the media sources with opposing political standings, like CNN and Fox News.

Formally, we let $G=(V, E, w)$ be an undirected
weighted graph, and $M$ and $M'$ be two external media sources. %
Each node $i \in V$ is either connected to $M$ or to $M'$, i.e., we either add
an edge $(i, M)$ or $(i, M')$ for all $i\in V$. We set the weight of this edge
to $\beta(1 + d_i)$, where $\beta\geq0$ is a model parameter.
This means that an external media source contributes a fraction of $\beta$ to
the influence on the users, and each node is exclusively connected to one media
source. Additionally, we let $\alpha \in [0, 1]$ denote the fraction of nodes
connected to $M$. Hence, there are $\alpha n$ nodes
connected to $M$, while $(1 - \alpha)n$ nodes are connected to $M'$. 
When $\alpha = 0$ or $\alpha = 1$, the nodes are influenced by a single media
source, which we refer to as the \emph{single media setting}.

In this section, we assume that the expressed
opinions of $M$ and $M'$ are fixed for the whole period, 
i.e., $M$ and $M'$ do not participate in the
opinion dynamics from \cref{eq:FJdynamics}.
We use $\efinop{M}\in[0,1]$ to denote the fixed expressed opinion of $M$ and
$\efinop{M'}\in[0,1]$ to denote the expressed opinion of $M'$.
For convenience, we define a vector $\zeta$ such that $\zeta_i = \efinop{M}$ if node $i$ is adjacent to $M$, and $\zeta_i = \efinop{M'}$ if node $i$ is adjacent to $M'$.

In our analysis, we use $\md{\finop}^{(t)}$ to denote the vector of expressed
opinions at time step $t$ of nodes in the graph $G$ that contains $M$ and $M'$.
We define $\md{\finop}^* = \lim_{t\to\infty} \md{\finop}^{(t)}$. We start by giving a closed-form formulation for the equilibrium expressed
opinions $\md{\finop}^*$.
\begin{theorem}
\label{lem:equilibrium-two-media-sources}
Let $G=(V,E,w)$ be a weighted graph and consider the setting described in \cref{sec:StubbornMediaSourcesSetting}, the equilibrium expressed opinions $\md{\finop}^*$ can be formulated as:
    \begin{equation}
        \md{\finop}^* = ((1 + \beta)\ID + \beta \Diag + \laplacian)^{-1}(\begop + \beta\left(\ID +\Diag\right) \zeta).
    \end{equation}\label{eq:multiplemediazstar}
\end{theorem}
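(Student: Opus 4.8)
The plan is to derive the one-step update rule for the augmented dynamics, recognize it as a nonhomogeneous first-order matrix difference equation in the sense of \cref{def:support:matrix-difference}, and then read off the equilibrium from \cref{lem:support:convergence}. First I would write the FJ update for a single node $i$ after attaching its media edge. Besides the innate opinion (weight $1$) and the graph neighbors (total weight $d_i$), node $i$ now also averages in the fixed media opinion $\zeta_i$ with weight $\beta(1+d_i)$. Hence the normalizing denominator becomes $1 + d_i + \beta(1+d_i) = (1+\beta)(1+d_i)$, and the numerator gains the term $\beta(1+d_i)\zeta_i$, which is the $i$-th entry of $\beta(\ID+\Diag)\zeta$. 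Collecting these node-wise identities into vector form gives
\begin{equation*}
\md{\finop}^{(t+1)} = \big((1+\beta)(\ID+\Diag)\big)^{-1}\big(\begop + \Adj\,\md{\finop}^{(t)} + \beta(\ID+\Diag)\zeta\big),
\end{equation*}
where I used that the diagonal matrix of denominators is exactly $(1+\beta)(\ID+\Diag)$.

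Next I would match this to \cref{def:support:matrix-difference} via the splitting $\cM = (1+\beta)(\ID+\Diag)$, $\cN = \Adj$, and $\cb = \begop + \beta(\ID+\Diag)\zeta$, so that $\cH = \cM^{-1}\cN$, $\cc = \cM^{-1}\cb$, and the update reads $\md{\finop}^{(t+1)} = \cH\,\md{\finop}^{(t)} + \cc$. The associated matrix is $\cA = \cM - \cN = (1+\beta)(\ID+\Diag) - \Adj$. Substituting $\Adj = \Diag - \laplacian$ and simplifying $(1+\beta)\Diag - \Diag = \beta\Diag$ yields $\cA = (1+\beta)\ID + \beta\Diag + \laplacian$, precisely the matrix in the claimed formula. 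By \cref{lem:support:convergence}, once $\cA$ and $\cM$ are nonsingular and $\rho(\cH)<1$, the dynamics converge to $\md{\finop}^* = \cA^{-1}\cb$, which is exactly the asserted expression.

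It then remains to verify the three hypotheses. Nonsingularity of $\cM$ is immediate, since it is diagonal with strictly positive entries $(1+\beta)(1+d_i)$ (recall $\beta \ge 0$). For $\cA$ I would write it as $\laplacian + \cD$ with $\cD = (1+\beta)\ID + \beta\Diag$ a positive diagonal matrix; as $\laplacian$ is an $M$-matrix, \cref{lem:support:m-matrix:equivalent-class}(a) shows $\cA$ is a nonsingular $M$-matrix, so $\cA^{-1}$ exists.

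The one genuinely nontrivial point — and the step I expect to be the \emph{main obstacle} — is the spectral-radius condition $\rho(\cH)<1$, since this is what makes the equilibrium exist and equal the formula in the first place. I would establish it elementarily: $\cH = \cM^{-1}\Adj$ is entrywise nonnegative (as $\cM^{-1}$ is a positive diagonal and $\Adj \ge 0$), so its spectral radius is bounded by its maximum row sum. The $i$-th row sum equals $\frac{1}{(1+\beta)(1+d_i)}\sum_{j\in N(i)} w_{ij} = \frac{d_i}{(1+\beta)(1+d_i)} < 1$, the strict inequality holding because $(1+\beta)(1+d_i) \ge 1 + d_i > d_i$. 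Hence $\rho(\cH)<1$. (Alternatively, since $\cM^{-1}\ge 0$ and $\cN \ge 0$ form a regular splitting of the nonsingular $M$-matrix $\cA$, standard $M$-matrix theory yields $\rho(\cM^{-1}\cN)<1$ directly.) With all hypotheses in place, \cref{lem:support:convergence} delivers $\md{\finop}^* = \cA^{-1}\cb = ((1+\beta)\ID + \beta\Diag + \laplacian)^{-1}(\begop + \beta(\ID+\Diag)\zeta)$, completing the argument.
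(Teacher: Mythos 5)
Your proposal is correct and follows essentially the same route as the paper's proof: the same node-wise expansion of the update rule (your $\cM = (1+\beta)(\ID+\Diag)$ is exactly the paper's $\ID + \md{\Diag}$), the same splitting with $\cN = \Adj$ and $\cb = \begop + \beta(\ID+\Diag)\zeta$, and the same appeal to \cref{lem:support:m-matrix:equivalent-class} to show $\cA = (1+\beta)\ID + \beta\Diag + \laplacian$ is a nonsingular $M$-matrix before invoking \cref{lem:support:convergence}. The one difference is to your credit: you explicitly verify $\rho(\cH)<1$ via the row-sum bound $\frac{d_i}{(1+\beta)(1+d_i)}<1$, a hypothesis of \cref{lem:support:convergence} that the paper's proof leaves implicit (it is justified there only indirectly, through the regular-splitting characterization for nonsingular $M$-matrices that you also note as an alternative).
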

Before providing the proof of this theorem, it is worth noticing that \cref{lem:equilibrium-two-media-sources} is a
generalization of \cref{equation:FJ-final-one}: in the absence of external
sources, indicated by $\beta = 0$, \cref{lem:equilibrium-two-media-sources}
matches \cref{equation:FJ-final-one}. We note that the result of the theorem
holds irrespective of the concrete values of $\efinop{M}$ and $\efinop{M'}$.

Intuitively, \cref{lem:equilibrium-two-media-sources} shows that the changes to the graph topology in the
graph by including $M$ and $M'$ can be translated to adapting and slightly
reweighting the innate opinions of the nodes in the initial graph, which we also illustrate in
\cref{fig:side_by_side_plots}.

\begin{figure*}[t]
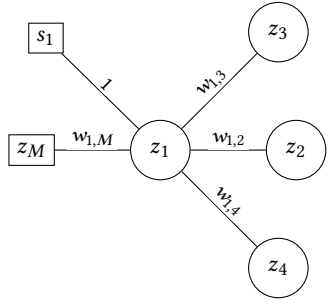
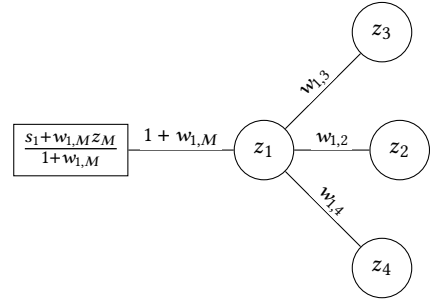

  \begin{subfigure}[b]{0.43\textwidth}
    \centering
    \inputtikz{tikz/fig1}
    \caption{An example of how $z_1$ gets influence from different nodes,including $M$, in our model.}
    \label{fig:first_plot}
  \end{subfigure}
  \hspace{1em}
  \begin{subfigure}[b]{0.43\textwidth}
    \centering
    \inputtikz{tikz/fig2}
    \caption{An equivalent influence on $z_1$, by merging the influence of node~$M$
    with the node $1$'s innate opinion~$s_1$.
  }
    \label{fig:second_plot}
  \end{subfigure}
  \caption{Two equivalent ways to present the influence of a stubborn media
	  source $M$ and its neighbors on node $1$, at each time step. 
        The nodes represent the innate
	  or expressed opinions; we use circles to present nodes' expressed
	  opinions and use boxes to annotate fixed innate opinions. We assume that
	  $N(1) = \{2,3,4\}$ and $w_{1,M} = \beta(1 + \sum_{i=1}^3 w_{1,i})$.}
  \label{fig:side_by_side_plots}
\end{figure*}

\begin{proof}[Proof of Theorem~\ref{lem:equilibrium-two-media-sources}]
Let $\md{\eDiag}_{ii} = \eDiag{i}{i} + \beta(1 + \eDiag{i}{i})$.
By expanding \cref{eq:FJdynamics}, node $i$'s expressed opinion at time step $t+1$ can be formulated as follows:
\begin{equation*}
    \begin{split}
    \md{\efinop}_i^{(t+1)} &= \frac{\ebegop{i} + \sum_{j\in N(i)}w_{i,j}\md{\efinop}_{j}^{(t)} + \beta(1 + \sum_{j\in N(i)}w_{i,j}) \zeta_i}{1 + \sum_{j\in N(i)}w_{i,j} + \beta(1 + \sum_{j\in N(i)}w_{i,j})} \\
    &= \frac{\sum_{j\in N(i)}w_{i,j}\md{\efinop}_{j}^{(t)}}{1 + \md{\eDiag}_{ii}} + \frac{\ebegop{i}}{1 + \md{\eDiag}_{ii}} + \frac{\beta(1 + \eDiag{i}{i}) \zeta_i}{1 + \md{\eDiag}_{ii}}.\\
    \end{split}
\end{equation*}  
Writing this in matrix notation, we get: 
\begin{align*}
    \md{\finop}^{(t+1)} = (\ID + \md{\Diag})^{-1} \Adj \md{\finop}^{(t)} + (\ID + \md{\Diag})^{-1} (\begop + (\md{\Diag} - \Diag)  \zeta).
\end{align*}
Note that this is a nonhomogeneous first-order matrix difference equation (see \cref{def:support:matrix-difference}), and hence we can apply \cref{lem:support:convergence} to check whether $\md{\finop}^{(t)}$ converges and obtain the value of $\lim_{t\rightarrow \infty} \md{\finop}^{(t)}$.

We apply \cref{lem:support:convergence} with $\cH = (\ID + \md{\Diag})^{-1} \Adj$ and $\cc = (\ID + \md{\Diag})^{-1} (\begop + (\md{\Diag} - \Diag) \zeta)$. 
Additionally, we set $\cM$ to the common term of $\cc$ and $\cH$, i.e., we let $\cM = \ID + \md{\Diag}$; hence, $\cN = \Adj$, $\cb = \begop + (\md{\Diag} - \Diag) \zeta$ and $\cA = \ID + \md{\Diag} - \Adj$. 

Next, we check whether $\cA$ and $\cM$ are singular. 
We notice that $\cA$ is a nonsingular $M$-matrix, as \laplacian is an $M$-matrix and $\cA = \ID + \md{\Diag} - \Adj = (1+\beta)\ID + \beta\Diag + \laplacian$ is the sum of an $M$-matrix and a positive diagonal matrix. 
Hence by \cref{lem:support:m-matrix:equivalent-class}, $\cA$ is a nonsingular $M$-matrix. 
Furthermore, $\cM$ is nonsingular as it is a positive diagonal matrix. 
Hence, we can directly apply \cref{lem:support:convergence}, and get:
\begin{align*}
\md{\finop}^{*} = \cA^{-1} \cb &= (\ID + \md{\Diag} - \Adj)^{-1} [\begop + (\md{\Diag} - \Diag) \zeta]\\
&= ((1+\beta)\ID + \beta\Diag + \laplacian)^{-1} (\begop + \beta (\ID+\Diag) \zeta).
\qedhere
\end{align*}
\end{proof}

\para{Competing media sources.}
For the rest of this paper, we assume that the two media sources exert opposing
influences on the average opinions of network users: $M$~aims to increase the
average opinion of network users (i.e., bring it closer to $1$), while $M'$ aims
to decrease the average opinion (bring it closer to $0$).
To this end, we consider a bias parameter $\gamma\in(0,1)$ and let $\bar{\begop} =
\frac{\one^{\intercal}\begop}{n}$ denote the average innate opinion of the network users. 
We then set the expressed opinion $\efinop{M}$ of media source $M$ to
$\efinop{M}=\min \{(1+\gamma) \bar{\begop}, 1\}$, and the expressed opinion
$\efinop{M'}$ of media source $M'$ to $\efinop{M'}=(1-\gamma) \bar{\begop}$. It
is important to note that there are instances where
$\efinop{M} = (1+\gamma) \bar{\begop}$ might be greater than $1$, thus exceeding
the maximum value; we therefore add the minimum constraint. 
We do not need to set the similar constraint on $\efinop{M'}$, as $(1-\gamma) \bar{\begop}$ 
is always at least $0$.

\subsection{Sum of opinions}
\label{sec:StubbornMediaSourcesSingleSum}
Our goal is to bound how much impact the external sources $M$ and $M'$ can
have on the average opinion in the network. To this end, we derive a bound on
how much the total sum of opinions in the network \emph{with} external media
sources differ from the total sum of opinions in the network \emph{without}
external media sources.

Recall that above we set $\efinop{M}=\min\{(1+\gamma) \bar{\begop},1\}$. We obtain
different results for the two cases and refer to them as \emph{non-truncated} and \emph{truncated} opinions.

Before we present our main results of these two cases, let us introduce
\cref{lem:sum-opinion-helper}. 
Our main technical results in this section are built on the top of this lemma since it allows us to derive a bound on the sum of
opinions that purely relies on the maximum and minimum degree of the graph. 
We provide the proof
of this lemma in \cref{sec:proof-thm-twomediaoneround}. 

\begin{lemma}\label{lem:sum-opinion-helper}
Let $G=(V,E,w)$ be a weighted graph, with degree matrix $\Diag$ and Laplacian matrix $\laplacian$.
Let $\beta\in[0, 1]$, and $d_{\min}$ and $d_{\max}$ be the minimum and maximum degree of $G$, respectively.
Then it holds that,
\begin{enumerate}
\item $\one^\intercal ((1 + \beta)\ID + \beta \Diag + \laplacian)^{-1} \leq \frac{1}{\beta(d_{\min} +1) + 1} \one^{\intercal}.$
\item $\one^\intercal ((1 + \beta)\ID + \beta \Diag + \laplacian)^{-1} \geq \frac{1}{\beta(d_{\max} +1) + 1}\one^{\intercal}.$
\end{enumerate}
The inequality holds element-wise. Furthermore, this holds with equality when the graph is $d$-regular. 
\end{lemma}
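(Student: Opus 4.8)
The plan is to prove both inequalities by analyzing the matrix $\cA := (1+\beta)\ID + \beta\Diag + \laplacian$ and its inverse, exploiting the fact that $\cA$ is a nonsingular $M$-matrix (as already established in the proof of \cref{lem:equilibrium-two-media-sources}), so that by \cref{lem:support:m-matrix:equivalent-class}(b) we have $\cA^{-1} \geq 0$ elementwise. This nonnegativity is the crucial structural fact: it means that bounds of the form $\one^\intercal \cA^{-1} \leq c\,\one^\intercal$ can be derived by comparing $\cA$ against a suitably scaled diagonal matrix, since $M$-matrix inverses reverse inequalities in a controlled way.

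The concrete strategy I would use is a left-eigenvector / row-sum argument on $\cA$. Observe that the vector $\one$ is a left null-vector of $\laplacian$, i.e. $\one^\intercal \laplacian = \zero^\intercal$, because each row of $\laplacian$ sums to zero. Hence $\one^\intercal \cA = \one^\intercal\big((1+\beta)\ID + \beta\Diag\big) = \big((1+\beta) + \beta d_i\big)_{i}^{\intercal}$, a row vector whose $i$-th entry is $\beta(d_i+1)+1$. Each such entry lies between $\beta(d_{\min}+1)+1$ and $\beta(d_{\max}+1)+1$. The idea is then to sandwich $\one^\intercal\cA$ between scaled copies of $\one^\intercal$: writing $c_{\min} = \beta(d_{\min}+1)+1$ and $c_{\max}=\beta(d_{\max}+1)+1$, we have $c_{\min}\,\one^\intercal \leq \one^\intercal \cA \leq c_{\max}\,\one^\intercal$ elementwise.

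To turn these into the claimed bounds on $\one^\intercal\cA^{-1}$, I would right-multiply by $\cA^{-1}$ and use its nonnegativity to preserve the direction of the inequalities. From $\one^\intercal\cA \geq c_{\min}\,\one^\intercal$, multiplying on the right by $\cA^{-1}\geq 0$ gives $\one^\intercal \geq c_{\min}\,\one^\intercal\cA^{-1}$, i.e. $\one^\intercal\cA^{-1} \leq \frac{1}{c_{\min}}\one^\intercal$, which is part (1). Symmetrically, from $\one^\intercal\cA \leq c_{\max}\,\one^\intercal$ we obtain part (2). The regularity statement is then immediate: when $G$ is $d$-regular, every $d_i=d$, so $\one^\intercal\cA = c\,\one^\intercal$ with $c=\beta(d+1)+1$ exactly, and right-multiplying by $\cA^{-1}$ yields the equality $\one^\intercal\cA^{-1} = \frac{1}{c}\one^\intercal$.

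The main obstacle is the direction-preservation step: multiplying a vector inequality on the right by a matrix only preserves the inequality if that matrix is entrywise nonnegative, which is exactly why invoking the $M$-matrix inverse-positivity from \cref{lem:support:m-matrix:equivalent-class}(b) is essential and must be stated carefully. A secondary subtlety is ensuring the hypothesis $\beta\in[0,1]$ (together with nonsingularity of $\cA$) is actually what guarantees $\cA$ is a nonsingular $M$-matrix so that part (b) applies; I would double-check that the decomposition $\cA = \laplacian + \big((1+\beta)\ID + \beta\Diag\big)$ writes $\cA$ as an $M$-matrix plus a positive diagonal matrix for all $\beta\geq 0$, so that part (a) of the same lemma upgrades it to nonsingular, and only then is the inverse-positivity legitimate.
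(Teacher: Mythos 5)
Your proposal is correct and is essentially the paper's own proof: both sandwich $\one^\intercal\cA$ between $(\beta(d_{\min}+1)+1)\one^\intercal$ and $(\beta(d_{\max}+1)+1)\one^\intercal$ using $\one^\intercal\laplacian=\mathbf{0}$, establish that $\cA$ is a nonsingular $M$-matrix via \cref{lem:support:m-matrix:equivalent-class} so that $\cA^{-1}\geq 0$, and then transfer the inequalities through the nonnegative inverse. The only difference is presentational --- you right-multiply the row-vector inequality by $\cA^{-1}$ in one step, while the paper spells out the same fact column by column via the observation $\va\geq\vb$, $\vc\geq 0 \Rightarrow \va^\intercal\vc\geq\vb^\intercal\vc$.
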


\para{Non-truncated opinions.} 
First, we present a result in
\cref{thm:twomediaoneround} if $\efinop{M} = (1+\gamma) \bar{\begop}$, i.e., the opinion of $\efinop{M}$ was not truncated.

\begin{theorem}\label{thm:twomediaoneround}
Let $G=(V,E,w)$ be a weighted graph and consider the setting described in
\cref{sec:StubbornMediaSourcesSetting} with $\efinop{M} = (1+\gamma)
\bar{\begop}$ and $\efinop{M'} = (1-\gamma) \bar{\begop}$. Then the sum of
equilibrium expressed opinions can be bounded as,
\begin{align*}
 \begin{array}{rcl}
    \one^\intercal \md{\finop}^* & \leq & \frac{1 + (d_{\max} + 1) \beta ((2\alpha -1)\gamma + 1) }{\beta(d_{\min}+1)+1} \one^{\intercal} \begop,  \\
    \one^\intercal \md{\finop}^* & \geq & \frac{1 + (d_{\min} + 1) \beta ((2\alpha -1)\gamma + 1) }{\beta(d_{\max}+1)+1} \one^{\intercal} \begop.
    \end{array}
\end{align*}
\end{theorem}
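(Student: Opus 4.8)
The plan is to start from the closed form $\md{\finop}^* = \cA^{-1}\big(\begop + \beta(\ID + \Diag)\zeta\big)$ furnished by \cref{lem:equilibrium-two-media-sources}, where $\cA = (1+\beta)\ID + \beta\Diag + \laplacian$, and to estimate the scalar $\one^\intercal \md{\finop}^* = (\one^\intercal \cA^{-1})\big(\begop + \beta(\ID+\Diag)\zeta\big)$. The two ingredients are the element-wise bounds on the row-sum vector $\one^\intercal \cA^{-1}$ supplied by \cref{lem:sum-opinion-helper}, namely $\frac{1}{\beta(d_{\max}+1)+1}\one^\intercal \le \one^\intercal \cA^{-1} \le \frac{1}{\beta(d_{\min}+1)+1}\one^\intercal$, together with the observation that the vector $b := \begop + \beta(\ID+\Diag)\zeta$ is entrywise nonnegative, since all opinions lie in $[0,1]$ and all degrees are positive. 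Because $b \ge 0$, the element-wise bounds on $\one^\intercal \cA^{-1}$ transfer directly to two-sided scalar bounds $\frac{1}{\beta(d_{\max}+1)+1}\one^\intercal b \le \one^\intercal \md{\finop}^* \le \frac{1}{\beta(d_{\min}+1)+1}\one^\intercal b$.

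It then remains to evaluate and bound $\one^\intercal b = \one^\intercal \begop + \beta\,\one^\intercal(\ID+\Diag)\zeta$. I would expand the second term as $\one^\intercal(\ID+\Diag)\zeta = \sum_{i}(1+d_i)\zeta_i$ and split the sum according to which media source each node is attached to, substituting $\zeta_i = (1+\gamma)\bar{\begop}$ for the $\alpha n$ nodes adjacent to $M$ and $\zeta_i = (1-\gamma)\bar{\begop}$ for the $(1-\alpha)n$ nodes adjacent to $M'$. This yields $\sum_i (1+d_i)\zeta_i = \bar{\begop}\big[(1+\gamma)\sum_{i\in M}(1+d_i) + (1-\gamma)\sum_{i\in M'}(1+d_i)\big]$, in which every coefficient and every summand is nonnegative ($1\pm\gamma>0$ as $\gamma\in(0,1)$). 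For the upper bound I would replace each $(1+d_i)$ by $(1+d_{\max})$, and for the lower bound by $(1+d_{\min})$; this is valid precisely because of the nonnegativity just noted, so the inequality direction is preserved. Either replacement collapses the bracket to a multiple of $\big[(1+\gamma)\alpha + (1-\gamma)(1-\alpha)\big]n$.

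The crux is the clean algebraic simplification $(1+\gamma)\alpha + (1-\gamma)(1-\alpha) = 1 + (2\alpha-1)\gamma$, which is exactly the factor appearing in the theorem. Using $\bar{\begop}\, n = \one^\intercal\begop$, the upper-bound computation gives $\one^\intercal b \le \big(1 + \beta(d_{\max}+1)((2\alpha-1)\gamma+1)\big)\one^\intercal\begop$, and combining this with $\one^\intercal\md{\finop}^* \le \frac{1}{\beta(d_{\min}+1)+1}\one^\intercal b$ produces the stated upper bound; the lower bound follows symmetrically by swapping the roles of $d_{\max}$ and $d_{\min}$. I expect the main obstacle to be bookkeeping rather than conceptual: one must keep track that the degree-replacement step is applied inside a sum of nonnegative terms, and note that the factor $(2\alpha-1)\gamma+1$ is always positive for $\alpha\in[0,1]$ and $\gamma\in(0,1)$, which keeps both combined bounds consistent.
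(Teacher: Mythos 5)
Your proposal is correct and follows essentially the same route as the paper's proof: both start from the closed form in \cref{lem:equilibrium-two-media-sources}, apply the element-wise row-sum bounds of \cref{lem:sum-opinion-helper}, replace $(\ID+\Diag)$ by $(d_{\max}+1)\ID$ or $(d_{\min}+1)\ID$ against the nonnegative vector $\zeta$, and finish with the identity $\alpha(1+\gamma)+(1-\alpha)(1-\gamma)=1+(2\alpha-1)\gamma$. Your explicit remark that the nonnegativity of $\begop + \beta(\ID+\Diag)\zeta$ is what licenses transferring the element-wise bounds to the scalar $\one^\intercal\md{\finop}^*$ is a point the paper leaves implicit, but it is the same argument.
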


To better illustrate the bounds,
\cref{cor:twomediaoneroundreg} states the bound for $d$-regular graphs, for
which it is \emph{tight}.

\begin{corollary}
    \label{cor:twomediaoneroundreg}
	Let $G=(V,E,w)$ be a weighted $d$-regular graph, i.e., $d_{\max} =
	d_{\min} = d$ and $\efinop{M} = (1+\gamma) \bar{\begop}$ and $\efinop{M'} =
	(1-\gamma) \bar{\begop}$.
	Then the sum of equilibrium expressed opinions is given by:
    \begin{align*}
        \one^\intercal\md{\finop}^* = \left(1 + \gamma \cdot \frac{\beta (d+1)  (2\alpha -1)}{\beta (d+1) + 1}\right) \one^{\intercal} \begop.
    \end{align*}
\end{corollary}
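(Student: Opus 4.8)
The plan is to obtain this corollary directly from \cref{thm:twomediaoneround} rather than re-deriving anything from scratch. The key observation is that for a $d$-regular graph we have $d_{\min} = d_{\max} = d$, so the upper and lower bounds stated in \cref{thm:twomediaoneround} have identical numerators and denominators. Consequently the two bounds collapse to a single value, which pins down $\one^\intercal \md{\finop}^*$ exactly; this is precisely the tightness already flagged in \cref{lem:sum-opinion-helper}, whose inequalities hold with equality on $d$-regular graphs.

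Concretely, substituting $d_{\min} = d_{\max} = d$ into either bound gives
\[
\one^\intercal \md{\finop}^* = \frac{1 + (d+1)\beta\left((2\alpha-1)\gamma + 1\right)}{\beta(d+1)+1}\,\one^\intercal \begop.
\]
The remaining work is a one-line algebraic simplification: I would expand the numerator as $1 + (d+1)\beta + (d+1)\beta(2\alpha-1)\gamma$, recognize that the first two terms $1 + (d+1)\beta$ are exactly the denominator $\beta(d+1)+1$, and split the fraction accordingly. This yields
\[
\one^\intercal \md{\finop}^* = \left(1 + \frac{(d+1)\beta(2\alpha-1)\gamma}{\beta(d+1)+1}\right)\one^\intercal \begop,
\]
which, after pulling $\gamma$ to the front, is the claimed closed form.

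There is no genuine obstacle here: the result is a specialization of an already-proven bound, and the only thing to verify carefully is the factoring step that identifies $1 + (d+1)\beta$ with the denominator. As an alternative route that avoids even invoking \cref{thm:twomediaoneround}, I could start from the closed form $\md{\finop}^* = ((1+\beta)\ID + \beta\Diag + \laplacian)^{-1}(\begop + \beta(\ID+\Diag)\zeta)$ of \cref{lem:equilibrium-two-media-sources}, left-multiply by $\one^\intercal$, and apply the equality case of \cref{lem:sum-opinion-helper} together with $\Diag = d\,\ID$ and $\one^\intercal \zeta = (\alpha \efinop{M} + (1-\alpha)\efinop{M'})\,n$; this second path makes the role of $\alpha$ and of the bias $\gamma$ fully explicit but requires the same concluding algebra.
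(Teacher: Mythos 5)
Your proposal is correct and matches the paper's (implicit) proof: the corollary is obtained exactly as you describe, by setting $d_{\min}=d_{\max}=d$ in \cref{thm:twomediaoneround} so that the two bounds coincide---with equality justified by the tight case of \cref{lem:sum-opinion-helper} for $d$-regular graphs---followed by the factoring $1+(d+1)\beta\bigl((2\alpha-1)\gamma+1\bigr) = \bigl(\beta(d+1)+1\bigr) + (d+1)\beta(2\alpha-1)\gamma$. Your alternative route via \cref{lem:equilibrium-two-media-sources} is also sound but is just the theorem's own proof specialized to the regular case, so it adds nothing beyond the first argument.
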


Observe that the sum of opinions increases
(decreases) when $\alpha > \frac{1}{2}$ ($\alpha < \frac{1}{2}$). Furthermore,
when $\beta (d+1)$ is sufficiently larger than $1$ then the theorem states that
$\one^\intercal\md{\finop}^* = (1 + \Omega(\gamma (2\alpha -1))) \one^{\intercal} \begop$,
i.e., under these conditions, the sum of opinions is only controlled by the bias
parameter~$\gamma$ and how much the stronger media source dominates
(controlled by $\alpha$).

This result is significant for two reasons: (1)~\cref{cor:suminitialsumfinal}
asserts that \emph{without} the intervention of external media sources, the total sum
of expressed and innate opinions of the network users always stays the same,
i.e., $\one^\intercal \finop^* = \one^\intercal \begop$. Hence, the theorem characterizes the power of the external sources and shows that, under the
parameter settings from above, the external sources bias the average opinion by a factor of
$1 + \Omega(\gamma (2\alpha -1))$.
(2)~The result of the theorem is enabled by the fact that the media sources $M$
and $M'$ are stubborn, i.e., their expressed opinions are fixed. In \cref{sec:non-stubborn} we show that if there is only a single
media source (i.e., $\alpha=1$), which can only control its innate opinion but
has to update its expressed opinion based on the update rule in
\cref{eq:FJdynamics}, the bias on the sum of opinions is much weaker (it only contributes a factor of $1+\frac{1+\gamma}{n}$ rather than $1+\Omega(\gamma)$).

\begin{proof}[Proof of \cref{thm:twomediaoneround}]
First we give a lower bound on the sum of expressed opinions, 
\begin{align*}
    \one^\intercal\md{\finop}^* &\stackrel{\mathclap{(a)}}{=} \one^\intercal\left((1 + \beta)
    \cdot \ID + \beta \cdot \Diag + \laplacian\right)^{-1}\left(\begop + \beta(\ID + \Diag)\zeta \right)\\
    &\stackrel{\mathclap{(b)}}{\geq}\frac{1}{\beta(d_{\max} + 1) + 1}\one^{\intercal}(\begop + \beta(\ID + \Diag)\zeta )\\
    &\geq \frac{1}{\beta(d_{\max} + 1) + 1}\left(\one^{\intercal}\begop + (d_{\min}+1)\beta \one^{\intercal}\zeta \right)\\
    &\stackrel{\mathclap{(c)}}= \frac{1 + (d_{\min} + 1) \beta ((2\alpha -1)\gamma + 1) }{\beta(d_{\max}+1)+1} \one^{\intercal} \begop,
\end{align*} 
where in Step~(a) we used \cref{lem:equilibrium-two-media-sources}, 
in Step~(b) we used \cref{lem:sum-opinion-helper} (2), 
and Step~(c) follows from how we set $\zeta$ (recall that 
$\zeta_i = \efinop{M} = (1+ \gamma) \cdot \frac{\one^{\intercal} \begop}{n}$ if
$i$ is adjacent to $M$, and 
$\zeta_i = \efinop{M'}= (1 - \gamma) \cdot \frac{\one^{\intercal} \begop}{n}$ if
$i$ is adjacent to $M'$). To obtain the upper bound stated in \cref{thm:twomediaoneround}, we perform the same calculation, except that now we
use \cref{lem:sum-opinion-helper} (1) in Step~(b).
\end{proof}

\para{Truncated opinions.}
Now, we consider
the case when $(1+\gamma) \bar{\begop} > 1$, and hence we truncate $\efinop{M}$ by setting $\efinop{M} = 1$. Nonetheless, we still set
$\efinop{M'} = (1 - \gamma)\frac{\one^{\intercal} \begop}{n}$. Note that this setting is interesting, since now $M$ induces a smaller bias than before, and we need to understand how much this boosts the impact of $M'$.
For $d$-regular graphs, we obtain \cref{lem:multimediaincreasePHASE2}, which
gives us a closed form solution for $\one^\intercal\md{\finop}^{*}$.

\begin{proposition}\label{lem:multimediaincreasePHASE2}
Let $G=(V,E,w)$ be a weighted $d$-regular graph. Suppose 
$\efinop{M} = 1$ and $\efinop{M'} = (1-\gamma) \bar{\begop}$. Then the sum of equilibrium expressed opinions is
\begin{equation}
    \label{eq:lem:multimediaincreasePHASE2}
    \one^\intercal\md{\finop}^{*} = 
	\frac{(1 + \beta (1 + d) (1-\alpha) (1-\gamma)) \one^{\intercal} \begop + \alpha \beta (1+d) n}{1 + \beta(1+d)}.
\end{equation}
\end{proposition}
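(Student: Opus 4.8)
The plan is to specialize the closed form from \cref{lem:equilibrium-two-media-sources} and to exploit that \cref{lem:sum-opinion-helper} is tight on regular graphs. First I would write
$$\one^\intercal\md{\finop}^* = \one^\intercal\left((1+\beta)\ID + \beta\Diag + \laplacian\right)^{-1}\left(\begop + \beta(\ID+\Diag)\zeta\right),$$
and observe that on a $d$-regular graph $\Diag = d\ID$, so the right-hand factor simplifies via $\beta(\ID+\Diag) = \beta(1+d)\ID$. Since $d_{\min}=d_{\max}=d$, the two bounds of \cref{lem:sum-opinion-helper} coincide and hold with equality, giving
$$\one^\intercal\left((1+\beta)\ID + \beta\Diag + \laplacian\right)^{-1} = \frac{1}{\beta(d+1)+1}\one^\intercal.$$
Substituting this reduces the claim to the scalar identity
$$\one^\intercal\md{\finop}^* = \frac{1}{\beta(d+1)+1}\left(\one^\intercal\begop + \beta(1+d)\,\one^\intercal\zeta\right),$$
so that everything hinges on evaluating $\one^\intercal\zeta$.

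The only genuinely model-specific step is the bookkeeping of $\one^\intercal\zeta=\sum_i\zeta_i$ under truncated opinions. I would split the sum over the $\alpha n$ nodes adjacent to $M$, for which $\zeta_i=\efinop{M}=1$, and the $(1-\alpha)n$ nodes adjacent to $M'$, for which $\zeta_i=\efinop{M'}=(1-\gamma)\bar{\begop}$. The former group contributes $\alpha n$, and the latter contributes $(1-\alpha)n\cdot(1-\gamma)\frac{\one^\intercal\begop}{n}=(1-\alpha)(1-\gamma)\one^\intercal\begop$, so that $\one^\intercal\zeta = \alpha n + (1-\alpha)(1-\gamma)\one^\intercal\begop$. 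Plugging this into the displayed scalar identity and collecting the coefficients of $\one^\intercal\begop$ then yields exactly \cref{eq:lem:multimediaincreasePHASE2}, noting that $\beta(d+1)+1 = 1+\beta(1+d)$.

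I do not expect a real obstacle, as the argument is a direct specialization of \cref{lem:equilibrium-two-media-sources} together with the tightness of \cref{lem:sum-opinion-helper} on regular graphs. The one subtlety to handle carefully is that truncation makes the contribution of the $M$-neighbors the fixed additive term $\alpha\beta(1+d)n$, independent of the average opinion, in contrast to the non-truncated setting of \cref{thm:twomediaoneround}; this is precisely what produces the standalone $\alpha\beta(1+d)n$ summand in the numerator, with the remaining simplification being routine algebra.
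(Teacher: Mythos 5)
Your proposal is correct and follows essentially the same route as the paper's proof: both specialize \cref{lem:equilibrium-two-media-sources} to the regular case, invoke the equality case of \cref{lem:sum-opinion-helper} to replace $\one^\intercal((1+\beta)\ID+\beta\Diag+\laplacian)^{-1}$ by $\frac{1}{1+\beta(1+d)}\one^\intercal$, and evaluate $\one^\intercal\zeta = \alpha n + (1-\alpha)(1-\gamma)\one^\intercal\begop$ by splitting over the neighbors of $M$ and $M'$. Your closing remark correctly identifies the only substantive difference from the non-truncated setting, namely that the $M$-neighbors contribute the opinion-independent term $\alpha\beta(1+d)n$.
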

Observe that the proposition implies that  $\one^\intercal\md{\finop}^{*}$ increases linearly as a function of $\alpha$
(since $\one^{\intercal} \begop \leq n$), and it decreases linearly as a function of
the media bias $\gamma$. This dependency on $\gamma$ stems from the fact that we truncate $\efinop{M}$, while \efinop{M'} decreases as $\gamma$ increases.

Furthermore, in \cref{cor:multimediaincreasePHASEtwoalphadbound} we derive a lower bound on $\one^\intercal \md{\finop}^{*}$ that is
\emph{independent} of $d$ and $\beta$ that purely relies on the innate opinions,
$\alpha$ and $\gamma$. Note that this provides a general bound on the power of the bias of $M'$ if $\efinop{M}=1$, since we provide a lower bound on $\one^\intercal\md{\finop}^{*}$.

\begin{corollary}\label{cor:multimediaincreasePHASEtwoalphadbound}
Suppose $G$ is $d$-regular.
By rearranging \cref{eq:lem:multimediaincreasePHASE2}, we obtain the following
lower bound on $\one^\intercal\md{\finop}^{*}$:
    $\one^\intercal\md{\finop}^{*} > \one^\intercal \begop (1 - \gamma + \alpha \gamma).$
\end{corollary}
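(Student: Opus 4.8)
The plan is to treat \cref{cor:multimediaincreasePHASEtwoalphadbound} as a purely algebraic consequence of the exact formula in \cref{lem:multimediaincreasePHASE2}, so no new analytic machinery is needed. Writing the closed form as
\[
\one^\intercal\md{\finop}^{*} = \frac{(1 + \beta(1+d)(1-\alpha)(1-\gamma))\,\one^\intercal\begop + \alpha\beta(1+d)\,n}{1 + \beta(1+d)},
\]
I would first observe that the denominator $1+\beta(1+d)$ is strictly positive (since $\beta \geq 0$ and $1+d > 0$). Hence the claimed inequality $\one^\intercal\md{\finop}^{*} > (1-\gamma+\alpha\gamma)\,\one^\intercal\begop$ is equivalent, after multiplying through by $1+\beta(1+d)$, to a single polynomial inequality in the parameters $\alpha,\beta,\gamma,d$ and the quantity $\one^\intercal\begop$.

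Next I would expand both sides and cancel. Subtracting $(1-\gamma+\alpha\gamma)(1+\beta(1+d))\,\one^\intercal\begop$ from the numerator, the constant (non-$\beta$) part contributes $1-(1-\gamma+\alpha\gamma)=\gamma(1-\alpha)$ times $\one^\intercal\begop$, while the terms carrying $\beta(1+d)\,\one^\intercal\begop$ collapse because their coefficient $(1-\alpha)(1-\gamma)-(1-\gamma+\alpha\gamma)$ simplifies to $-\alpha$. Combining the surviving $-\alpha\beta(1+d)\,\one^\intercal\begop$ with the remaining $\alpha\beta(1+d)\,n$, the whole residual reduces to exactly
\[
\gamma(1-\alpha)\,\one^\intercal\begop + \alpha\beta(1+d)\bigl(n - \one^\intercal\begop\bigr),
\]
so the proof boils down to showing this expression is positive.

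Finally I would certify nonnegativity of each summand from the parameter constraints: $\gamma\in(0,1)$, $\alpha\in[0,1]$ and $\one^\intercal\begop\geq 0$ make the first term nonnegative, while $\alpha,\beta\geq 0$, $1+d>0$, together with $\one^\intercal\begop=\sum_i\begop_i\leq n$ (as every $\begop_i\in[0,1]$), make the second term nonnegative. This already gives $\one^\intercal\md{\finop}^{*} \geq (1-\gamma+\alpha\gamma)\,\one^\intercal\begop$. The one point that needs genuine care---and the only real obstacle---is upgrading this to the \emph{strict} inequality, since strictness does not follow automatically from a sum of nonnegative terms. The residual vanishes precisely when $\alpha=1$ and simultaneously either $\beta=0$ or $\one^\intercal\begop=n$. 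In the truncated regime studied here we have $(1+\gamma)\bar{\begop}>1$, which forces $\one^\intercal\begop>0$; thus whenever $\alpha<1$ the first summand $\gamma(1-\alpha)\,\one^\intercal\begop$ is already strictly positive and yields the claimed strict bound. I would therefore state explicitly that the competing-sources regime ($\alpha<1$, or alternatively $\beta>0$ with not-all-extreme innate opinions, in which case the second summand supplies strictness) is what makes the inequality strict, ruling out the degenerate boundary cases.
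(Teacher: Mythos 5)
Your proposal is correct and follows essentially the same route as the paper: a purely algebraic rearrangement of the closed form from \cref{lem:multimediaincreasePHASE2}, where the paper replaces $n$ by $\one^{\intercal}\begop$ in the $\alpha n \beta(1+d)$ term (its strict Step~(a), justified by $n > \one^{\intercal}\begop$) and then matches coefficients via $(1-\alpha)(1-\gamma)+\alpha = 1-\gamma+\alpha\gamma \leq 1$, which is exactly the cancellation you carry out after clearing the denominator. Your explicit residual $\gamma(1-\alpha)\,\one^{\intercal}\begop + \alpha\beta(1+d)\bigl(n-\one^{\intercal}\begop\bigr)$, together with the analysis of when it vanishes, is if anything slightly more careful about strictness than the paper's chain, whose Step~(a) is only strict when $\alpha\beta>0$ and $\one^{\intercal}\begop<n$ --- boundary cases the paper leaves implicit and you rule out explicitly.
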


\begin{proof}[Proof of \cref{lem:multimediaincreasePHASE2}]

We use \cref{lem:equilibrium-two-media-sources} and simplify the expression:
\begin{align*}
    \one^{\intercal} \md{\finop}^* &= \one^{\intercal} ((1+\beta) \ID + \beta \Diag + \laplacian)^{-1} (\begop + \beta (\ID + \Diag) \zeta) \\
    &\stackrel{(a)}{=} \frac{1}{ 1 + \beta (1+d) } \one^{\intercal} (\begop + \beta (\ID + \Diag) \zeta) \\
    &\stackrel{(b)}{=} \frac{\one^{\intercal} \begop + \beta (1 + d) (\alpha n + (1-\alpha) (1 - \gamma) \one^{\intercal} \begop)}{1 + \beta (1+d) } \\
    &= \frac{1 + (1-\alpha) (1 - \gamma) \beta (1+d) }{1 + \beta(1+d) } \one^{\intercal} \begop + \frac{\alpha n \beta (1 + d)}{1 + \beta (1+d) },
\end{align*}
where
Step~(a) holds by \cref{lem:sum-opinion-helper}, which we prove below.
Step~(b) holds by plugging $\efinop{M} = 1$ and $\efinop{M'} = (1 - \gamma)\frac{\one^{\intercal} \begop}{n}$ into $\zeta$, since
    $\one^{\intercal} \zeta = \alpha n \cdot 1 + (n - \alpha n) \frac{\one^{\intercal} \begop}{n} (1 - \gamma) 
    = \alpha n \cdot 1 + (1 - \alpha ) (1 - \gamma) \one^{\intercal} \begop. 
    $

\end{proof}

\begin{proof}[Proof of \cref{cor:multimediaincreasePHASEtwoalphadbound}]
We start by applying \cref{lem:multimediaincreasePHASE2}:

    \begin{align*}
    &\frac{1 + (1-\alpha) (1 - \gamma) \beta (1+d) }{1 + \beta(1+d) } \one^{\intercal} \begop + \frac{\alpha n \beta (1 + d)}{1 + \beta (1+d) }\\
    &\stackrel{(a)}{>} \frac{1 + (1-\alpha) (1 - \gamma) \beta (1+d) + \alpha \beta (1 + d)}{1 + \beta(1+d) } \one^{\intercal} \begop \\
    &\stackrel{(b)}{\geq} \frac{(1 - \gamma + \alpha \gamma) + (1-\gamma + \alpha \gamma) \beta (1+d) }{1 + \beta(1+d) } \one^{\intercal} \begop\\
    &=(1 - \gamma + \alpha \gamma)\one^{\intercal} \begop.
\end{align*}
 Note that Step~(a) is obtained from $n > \one^{\intercal} \begop$, and Step~(b) is obtained from $\alpha \leq 1$, hence $1 - \gamma + \alpha \gamma = 1 + \gamma (\alpha -1) \leq 1$. By substituting back into the formula, we obtain 
\begin{align*}\one^{\intercal} \md{\finop}^* > \one^{\intercal} \begop (1 - \gamma + \alpha \gamma),
 \end{align*}
 which is what we claimed in the corollary.
\end{proof}

\subsection{Proof of \cref{lem:sum-opinion-helper}}
\label{sec:proof-thm-twomediaoneround}

    First, we prove that it holds element-wise that 
    $\one^\intercal \geq \frac{1}{\beta(d_{\max}+1)+1} \one^{\intercal} (\ID + \laplacian + (\Diag + \ID) \beta)$, 
    and $\one^\intercal \leq \frac{1}{\beta(d_{\min} +1) +1} \one^{\intercal} (\ID + \laplacian + (\Diag + \ID) \beta) $.
    After that, we show that $(\ID + \laplacian + (\Diag + \ID) \beta)$ is a non-singular $M$-matrix;
    this implies that $(\ID + \laplacian + (\Diag + \ID) \beta)^{-1}$
    only contains nonnegative elements (see \cref{lem:support:m-matrix:equivalent-class}). 
    In the end, we show that by combining these two properties, the lemma follows.

    We note that $\one^\intercal \geq  \frac{1}{\beta(d_{\max}+1)+1} \one^{\intercal} (\ID + \laplacian + (\Diag + \ID) \beta)$, as,
    \begin{align*}
        &\frac{1}{\beta(d_{\max}+1)+1} \one^{\intercal} (\ID + \laplacian + (\Diag + \ID) \beta) \\
        &\stackrel{\mathclap{(a)}}{=}\frac{1}{\beta(d_{\max}+1)+1} \one^{\intercal} \left(\ID + (\Diag + \ID) \beta\right) \\
        &\stackrel{\mathclap{(b)}}{\leq} \frac{1}{\beta(d_{\max}+1) + 1} \one^{\intercal}\ID\left(1 + \beta(d_{\max}+1)\right) = \one^{\intercal},
    \end{align*}
    where Step~(a) holds since $\one^{\intercal}\laplacian = \mathbf{0}$ and Step~(b) holds since $\Diag + \ID \leq (d_{\max} + 1) \ID$. 

    Similarly, to show that $\one^\intercal \leq \frac{1}{\beta(d_{\min} +1) +1} \one^{\intercal} (\ID + \laplacian + (\Diag + \ID) \beta) $, we proceed as follows:
    \begin{align*}
        &\frac{1}{\beta(d_{\min} +1)+1 } \one^{\intercal} (\ID + \laplacian + (\Diag + \ID) \beta) \\
        &\stackrel{(a)}= \frac{1}{\beta(d_{\min} +1) +1} \one^{\intercal}(\ID + (\Diag + \ID) \beta) \\
        &\stackrel{(b)}\geq \frac{1}{\beta(d_{\min} +1) +1}\one^{\intercal} \ID (1 + \beta (d_{\min} + 1)) = \one^{\intercal},
    \end{align*}
    where Step~(a) follows from the fact that $\one^{\intercal} \laplacian = \mathbf{0}$, and Step~(b) holds because $\Diag + \ID \geq (d_{\min} + 1)$.
    
    Next, we notice that $(\ID + \laplacian + (\Diag + \ID) \beta)$ is a non-singular $M$-matrix, 
    as it is the sum of the $M$-matrix $\laplacian$ and the positive diagonal matrix $(\ID + (\Diag + \ID) \beta)$,
    according to \cref{lem:support:m-matrix:equivalent-class}, it is a non-singular $M$-matrix.

    Before we finish the proof, we briefly prove an observation. 
    Consider vectors $\va \geq \vb$, where the inequality holds element-wise, and a vector $\vc$ with non-negative entries.
    Now observe that $\va^{\intercal} \vc = \sum_{i} a_i c_i \geq \sum_{i} b_i c_i = \vb^{\intercal} \vc$.

    Finally, we prove the lemma by combining the previous results. 
    To show this, we let $x_j$ denote the $j$'th entry of
    $\one^\intercal (\ID + \laplacian + (\Diag + \ID) \beta)^{-1}$ 
    and we let $y_j$ denote the $j$'th entry of $ \frac{1}{\beta(d_{\max}+1)+1} \one^{\intercal}$. Then we set $\va^{\intercal} = \one^\intercal$, $\vb^{\intercal} = \frac{1}{\beta(d_{\max}+1)+1} \one^{\intercal} (\ID + \laplacian + (\Diag + \ID) \beta)$, and
    we let $\vc$ be the $j$'th column vector of $(\ID + \laplacian + (\Diag + \ID) \beta)^{-1}$. Notice that by \cref{lem:support:m-matrix:equivalent-class} (see above), $\vc$ is a vector with nonnegative entries. 
    Now we obtain that,
    \begin{align*}
     x_j = \one^\intercal \vc &\geq \frac{1}{\beta(d_{\max}+1)+1} \one^{\intercal} (\ID + \laplacian + (\Diag + \ID) \beta) \vc \\
     &= \frac{1}{\beta(d_{\max}+1)+1} \one^{\intercal} \v+d = y_j,
     \end{align*}
     where $\v+d$ is the indicator vector with a $1$ in the $j$'th entry and $0$ in all other entries.
     Since this holds for all $j$, this implies our first bound.
     
     Similarly, letting $\va^{\intercal} = \frac{1}{\beta(d_{\min}+1)+1} \one^{\intercal} (\ID + \laplacian + (\Diag + \ID) \beta)$, $\vb^{\intercal} = \one^\intercal$, and $\vc$ any column of $(\ID + \laplacian + (\Diag + \ID) \beta)^{-1}$, we can prove 
     $\one^\intercal (\ID + \laplacian + (\Diag + \ID) \beta)^{-1} \leq \frac{1}{\beta(d_{\min}+1)+1} \one^{\intercal} $.

\section{Stubborn media sources with multiple periods}
\label{sec:MultiplePeriod}

In this section, we study the impact of two stubborn media sources on the
expressed opinions over a longer time horizon. In
\cref{sec:StubbornMediaSources} we considered the expressed opinions
$\md{\finop}^*$ after $M$ and $M'$ were added to the graph and the expressed
opinions converged. Now we consider multiple \emph{periods} of such convergence
steps (after each of which $\efinop{M}$ and $\efinop{M'}$ are updated) to
understand how quickly opinions can get radicalized towards an extreme opinion after
multiple periods. This setting is applicable when each period of convergence
corresponds to a new societal topic and is similar to a setting studied by
\cite{gaitonde2020adversarial}.

Formally, our model is as follows. We consider a sequence of \emph{periods}
$t=0,1,2,\dots$. At the beginning of each period~$t$, individuals set their
innate opinions to their expressed opinions from the previous period. 
Formally, let $\mdtwo{\efinop}_i^{(t)}$ denote the equilibrium expressed
opinion of node~$i$ at the end of period~$t$. Then in period~$t+1$, all
nodes~$i$ set their innate opinions to $\ebegop_i^{(t+1)} = \mdtwo{\efinop}_i^{(t)}$.
After updating the nodes' innate opinions, the opinions of the external sources
are also updated, now defined as $\efinop{M}^{(t+1)}:= \min \{(1 + \gamma)
\bar{\begop}^{(t+1)}, 1\},$ and $\efinop{M'}^{(t+1)}:= (1 - \gamma)
\bar{\begop}^{(t+1)},$ where $\bar{\begop}^{(t+1)} =
\frac{\one^{\intercal}\begop^{(t+1)}}{n}$ is the average innate opinion at the
start of period~$t+1$. Then we run the model from
\cref{sec:StubbornMediaSources} to obtain $\mdtwo{\efinop}_i^{(t+1)}$. In our analysis, we restrict ourselves to regular graphs and assume $\alpha \geq
1/2$. We will provide results for $\alpha > \frac{1}{2}$ and for
$\alpha = \frac{1}{2}$.

\spara{Unequally strong media sources.}
First, we assume that $\alpha > \frac{1}{2}$, i.e., that source~$M$ is connected
to strictly more nodes than $M'$. We first compute the minimum number of periods it
takes to obtain $\efinop{M}^{(t)} = 1$, which we consider as a criterion for
radicalization, since in this case the average opinion in the network is at least
$1/(1+\gamma)$. We denote this number of periods by $\ell^*$. 

\begin{proposition}
\label{lem:numroundsphase1multimedia}
Let $G$ be a $d$-regular graph and let $\gamma$, $\mdtwo{\finop}^{(0)}$ $\alpha$ and $\ell^*$ be as defined above. We assume that $\alpha > 1/2$. Then,
\begin{equation*}
    \ell^* = \frac{\log\left(\frac{n}{\one^{\intercal}\begop(1+\gamma)}\right)}{\log\left(1+\gamma\cdot\frac{(d+1)\beta(2\alpha-1)}{(d+1)\beta+1}\right)}.
\end{equation*}   
\end{proposition}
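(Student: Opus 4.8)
The plan is to reduce the multi-period dynamics to iterating the single-period multiplicative growth factor already established for regular graphs in \cref{cor:twomediaoneroundreg}. Write $c := 1 + \gamma\cdot\frac{(d+1)\beta(2\alpha-1)}{(d+1)\beta+1}$ for this factor; since $\alpha > 1/2$ we have $2\alpha-1>0$ and hence $c>1$, so each period strictly increases the sum of opinions, which is exactly what drives the network toward radicalization.

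First I would record the one-period recurrence. Fix a period $t$ and suppose that $\efinop{M}^{(t)}$ is not truncated, i.e. $\efinop{M}^{(t)} = (1+\gamma)\bar{\begop}^{(t)}$. Then \cref{cor:twomediaoneroundreg}, applied with innate opinions $\begop^{(t)}$, gives $\one^\intercal \mdtwo{\finop}^{(t)} = c\,\one^\intercal\begop^{(t)}$. By the definition of the period update, the innate opinions at the start of period $t+1$ equal the equilibrium expressed opinions at the end of period $t$, that is $\begop^{(t+1)} = \mdtwo{\finop}^{(t)}$; hence $\one^\intercal\begop^{(t+1)} = c\,\one^\intercal\begop^{(t)}$. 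Iterating this identity from period $0$ yields the geometric relation $\one^\intercal\begop^{(\ell)} = c^{\ell}\,\one^\intercal\begop^{(0)}$.

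Next I would translate the radicalization criterion into a condition on this sum and solve for $\ell$. By definition $\ell^*$ is the first period at which $\efinop{M}^{(\ell^*)}=1$, which, since truncation occurs exactly when $(1+\gamma)\bar{\begop}^{(\ell)} \ge 1$, is equivalent to $\one^\intercal\begop^{(\ell^*)} \ge \frac{n}{1+\gamma}$. Substituting the geometric growth and solving $c^{\ell^*}\,\one^\intercal\begop^{(0)} = \frac{n}{1+\gamma}$ for $\ell^*$ gives $\ell^* = \frac{\log\!\left(n/((1+\gamma)\one^\intercal\begop^{(0)})\right)}{\log c}$, which is exactly the claimed formula once $c$ is expanded.

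The step requiring the most care is justifying that \cref{cor:twomediaoneroundreg} is legitimately applicable at \emph{every} period $t=0,1,\dots,\ell^*-1$, since that corollary assumes the non-truncated regime $\efinop{M}^{(t)}=(1+\gamma)\bar{\begop}^{(t)}$. Here the monotonicity from $c>1$ does the work: the averages satisfy $\bar{\begop}^{(t)} = c^{t}\bar{\begop}^{(0)}$ and thus form a strictly increasing sequence, so the threshold $(1+\gamma)\bar{\begop}^{(t)}\ge 1$ is crossed for the first time precisely at $\ell^*$; for all earlier periods we indeed have $(1+\gamma)\bar{\begop}^{(t)}<1$, i.e. no truncation. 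This makes the repeated invocation of \cref{cor:twomediaoneroundreg} valid and closes the argument. As stated, $\ell^*$ is the exact real-valued solution of the threshold equation; the smallest integer number of periods achieving radicalization is its ceiling.
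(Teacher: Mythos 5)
Your proposal is correct and follows essentially the same route as the paper's proof: apply the single-period growth factor from \cref{cor:twomediaoneroundreg} to get the geometric relation $\one^\intercal\mdtwo{\finop}^{(k)} = c^{k}\,\one^\intercal\begop$ and solve the threshold equation $(1+\gamma)\bar{\mdtwo{\finop}}^{(\ell^*)} = 1$ for $\ell^*$. Your additional verification that no truncation occurs before period $\ell^*$ (via monotonicity of the averages when $c>1$) is a point the paper leaves implicit, and it is a welcome tightening rather than a deviation.
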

Observe that if for the initial innate opinions at period $0$ it
holds that $\one^{\intercal}\begop=\Omega(n)$ and $\gamma\cdot\frac{(d+1)\beta(2\alpha-1)}{(d+1)\beta+1}$ is a constant (which are reasonable assumptions for most scenarios), then $\ell^* = O(1)$. That is, it takes a constant number of periods to radicalize the sum of opinions in the network.

\begin{proof}[Proof of \cref{lem:numroundsphase1multimedia}]
By \cref{cor:twomediaoneroundreg} we note that,
    \begin{align*}
    \one^\intercal\md{\finop}^* &= \left(1 + \frac{(d+1) \beta \gamma (2\alpha -1)}{(d+1) \beta + 1}\right) \one^{\intercal} \begop.
    \end{align*}
    Now, observe that
    \begin{align*}
        \one^\intercal\mdtwo{\finop}^{(k)} &= \left(1 + \frac{(d+1) \beta \gamma (2\alpha -1)}{(d+1) \beta + 1}\right)^k \one^{\intercal} \begop.
    \end{align*}
    As we assume that $\alpha>1/2$, we are interested in finding the period for which,
    \begin{equation*}
        (1 + \gamma)\bar{\mdtwo{\finop}}^{(k)} = (1 + \gamma)\cdot \frac{\one^{\intercal}\mdtwo{\finop}^{(k)}}{n}=1.
    \end{equation*}
    Now elementary calculations show that,
    \begin{align*}
        &(1 + \gamma)\cdot \frac{\one^{\intercal}\mdtwo{\finop}^{(k)}}{n}=1
        \iff k = \frac{\log\left(\frac{n}{\one^{\intercal}\begop(1+\gamma)}\right)}{\log\left(1 + \frac{(d+1)\gamma(2\alpha -1)}{(d+1)\beta+1}\right)}.
		\qedhere
    \end{align*}
\end{proof}

Next, let us consider multiple periods where the expressed opinion of $M$ is truncated, i.e., $\efinop{M} =1$. Here, even though $\alpha > 0.5$, $\one^{\intercal}\mdtwo{\finop}$ might decrease as $\efinop{M'}$ is still $(1 - \gamma)\bar{\begop}^{(t+1)}$ but $\efinop{M'} = 1$ (instead of $(1+\gamma)\bar{\begop}^{(t+1)}$). However, by setting $\alpha > 0.5$, $\md{\finop}^* = \mdtwo{\finop}^{(t+1)}$ and $\begop = \mdtwo{\finop}^{(t)}$ in \cref{cor:multimediaincreasePHASEtwoalphadbound} we observe that the normalized average opinion cannot drop below $\frac{1}{(1 + \gamma)^2}$. 

\spara{Equally strong media sources.} Next, we consider the case when
$\alpha = \frac{1}{2}$, i.e., when both media sources are equally strong. We
give a closed-form solution of the final expressed opinions after an infinite
number of periods. Here, we denote the final expressed opinions by
$\mdtwo{\finop}^{(\infty)} = \lim_{t\rightarrow \infty} \mdtwo{\finop}^{(t)}$.
Furthermore, we let $\zeta^{(0)}$ denote the vector which contains the initial opinions of the sources that the nodes are connected to; more formally, we set $\zeta_i^{(0)} = \ebegop_M^{(0)}$ if node $i$ is connected to source $M$ and $\zeta_i^{(0)} = \ebegop_{M'}^{(0)}$ if node~$i$ is connected to source $M'$.
Then we obtain the following result.

\begin{proposition}\label{lemma:twomediaoneroundalphahalf}
    Let $G=(V, E, w)$ be a weighted $d$-regular graph and let $\alpha =
	\frac{1}{2}$. Then 
    $\one^{\intercal} \mdtwo{\finop}^{(t)} = \one^{\intercal} \mdtwo{\begop}^{(0)}$ for all $t\geq 0$ and
 $\mdtwo{\finop}^{(\infty)} = (\ID + \frac{1}{\beta(1+d)} \laplacian)^{-1} \zeta^{(0)}$. 
\end{proposition}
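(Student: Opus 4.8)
The plan is to dispatch the two claims in sequence: first prove that the total sum of opinions is invariant across periods, and then exploit this invariance to collapse the across-period dynamics into a single linear recurrence whose limit I can read off. For the sum-invariance claim I would induct on the period~$t$. Within any single period on a $d$-regular graph, \cref{cor:twomediaoneroundreg} gives $\one^\intercal \md{\finop}^* = \left(1 + \gamma \cdot \frac{\beta(d+1)(2\alpha-1)}{\beta(d+1)+1}\right)\one^\intercal\begop$, and with $\alpha = \frac{1}{2}$ the factor $2\alpha - 1$ vanishes, so $\one^\intercal \md{\finop}^* = \one^\intercal\begop$; that is, a single round of convergence preserves the sum. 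For the base case $t=0$ the innate opinions are $\mdtwo{\begop}^{(0)}$, which immediately yields $\one^\intercal\mdtwo{\finop}^{(0)} = \one^\intercal\mdtwo{\begop}^{(0)}$. For the inductive step, the innate opinions at the start of period~$t+1$ are $\begop^{(t+1)} = \mdtwo{\finop}^{(t)}$, so the same single-round identity gives $\one^\intercal\mdtwo{\finop}^{(t+1)} = \one^\intercal\mdtwo{\finop}^{(t)}$, and the inductive hypothesis closes the argument. (I would note that \cref{cor:twomediaoneroundreg} presumes the non-truncated regime $\efinop{M} = (1+\gamma)\bar{\begop}$; since the average is constant across periods once the sum is fixed, this regime either holds in every period or in none, so the statement pertains to the case $(1+\gamma)\bar{\begop}^{(0)} \le 1$.)

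The key consequence I would draw from the first part is that the media opinions are \emph{frozen} across periods: because the sum, and hence the average $\bar{\begop}^{(t)}$, never changes, the source opinions $\efinop{M}^{(t)} = (1+\gamma)\bar{\begop}^{(t)}$ and $\efinop{M'}^{(t)} = (1-\gamma)\bar{\begop}^{(t)}$ coincide in every period, i.e.\ $\zeta^{(t)} = \zeta^{(0)}$ for all~$t$. This is what makes the across-period map linear and time-homogeneous. Specializing \cref{lem:equilibrium-two-media-sources} to the $d$-regular case (where $\Diag = d\ID$) and substituting $\begop^{(t+1)} = \mdtwo{\finop}^{(t)}$ together with the frozen source vector $\zeta^{(0)}$ gives the recurrence $\mdtwo{\finop}^{(t+1)} = B^{-1}\left(\mdtwo{\finop}^{(t)} + \beta(1+d)\,\zeta^{(0)}\right)$, where $B := (1+\beta(1+d))\ID + \laplacian$.

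This is a nonhomogeneous first-order matrix difference equation (\cref{def:support:matrix-difference}) with $\cH = B^{-1}$ and $\cc = \beta(1+d)B^{-1}\zeta^{(0)}$. To apply \cref{lem:support:convergence} I would take $\cM = B$, $\cN = \ID$ (so that $\cH = \cM^{-1}\cN = B^{-1}$ as required) and $\cb = \beta(1+d)\zeta^{(0)}$, which makes $\cA = \cM - \cN = \beta(1+d)\ID + \laplacian$. Since the Laplacian eigenvalues $\lambda_i$ are nonnegative, $\cM$ and $\cA$ have eigenvalues $1+\beta(1+d)+\lambda_i$ and $\beta(1+d)+\lambda_i$ respectively, so both are positive definite and hence nonsingular, and $\rho(\cH) = \frac{1}{1+\beta(1+d)} < 1$. \cref{lem:support:convergence} then yields $\mdtwo{\finop}^{(\infty)} = \cA^{-1}\cb = (\beta(1+d)\ID + \laplacian)^{-1}\beta(1+d)\zeta^{(0)} = (\ID + \frac{1}{\beta(1+d)}\laplacian)^{-1}\zeta^{(0)}$, which is exactly the claimed formula.

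The conceptual heart of the argument, and the step I expect to require the most care, is the freezing of the media opinions: the two genuinely coupled processes (within-period FJ convergence, and across-period re-biasing of the sources toward $(1\pm\gamma)$ times the current average) decouple precisely because $\alpha = \frac{1}{2}$ keeps the average invariant, turning the iteration into the clean linear system above. Once this observation is in place, the remainder is routine algebra together with verifying the spectral-radius and nonsingularity hypotheses of \cref{lem:support:convergence}.
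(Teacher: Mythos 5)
Your proposal is correct and follows essentially the same route as the paper's proof: sum invariance via \cref{cor:twomediaoneroundreg} at $\alpha = \tfrac{1}{2}$, the consequent freezing of the source vector $\zeta^{(t)} = \zeta^{(0)}$, and an application of \cref{lem:support:convergence} to the resulting time-homogeneous recurrence with the same splitting $\cM = (1+\beta(1+d))\ID + \laplacian$, $\cN = \ID$, $\cA = \beta(1+d)\ID + \laplacian$. The only differences are cosmetic --- you specialize $\Diag = d\,\ID$ earlier and verify nonsingularity and $\rho(\cH) = \frac{1}{1+\beta(1+d)} < 1$ directly by eigenvalues where the paper invokes its $M$-matrix lemma --- and your parenthetical restriction to the non-truncated regime $(1+\gamma)\bar{\begop}^{(0)} \leq 1$ makes explicit an assumption the paper's proof leaves implicit.
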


\cref{lemma:twomediaoneroundalphahalf} shows that when the media sources are equally strong, the sum of opinions does not change (as one might have expected).
Surprisingly, the proposition also implies that $\mdtwo{\finop}^{(\infty)}$ is solely
dependent on the sources' initial opinions $\zeta^{(0)}$, the parameter $\beta$ which determines the impact of the sources on the nodes, and the
graph's adjacency matrix $\Adj$ (which determines the degree $d$ and the
Laplacian matrix $\laplacian$). Interestingly, observe that if $\beta(d+1)$ is large, $\mdtwo{\finop}^{(\infty)}$ is very close to $\zeta^{(0)}$.

\begin{proof}[Proof of \cref{lemma:twomediaoneroundalphahalf}]
    We first plug $\alpha = \frac{1}{2}$ into \cref{cor:twomediaoneroundreg}, and we notice that $\one^{\intercal} \md{\finop}^* = \one^{\intercal} \begop$. 
    As we always set $\begop^{(t+1)}$ to the expressed equilibrium opinions at the end of period $t$, $\mdtwo{\finop}^{(t)}$, it follows that for any period it holds that $\one^{\intercal} \mdtwo{\finop}^{(t+1)} = \one^{\intercal} \mdtwo{\finop}^{(t)}$.
    
    Next, consider the $t$-th period.
    We set $M$'s opinion to $$z_M^{(t)}:= (1 + \gamma)\frac{\one^{\intercal}\md{\finop}^{(t)}}{n} = (1+\gamma) \frac{\one^{\intercal} \begop}{n},$$
    and $M'$'s opinion to $$z_{M'}^{(t)}:= (1 - \gamma)\frac{\one^{\intercal}\md{\finop}^{(t)}}{n} = (1 - \gamma) \frac{\one^{\intercal} \begop}{n}.$$ 
    In other words, $z_{M}^{(t)}$ and $z_{M'}^{(t)}$ stay the same over all the periods.

    Based on our observation, we use the fact that $\zeta$ always stays the same over all the periods and re-formulate the linear equation of \cref{lem:equilibrium-two-media-sources} into the following nonhomogeneous first-order matrix difference equation:
    \begin{align*}
        \mdtwo{\finop}^{(t+1)} &= ((1 + \beta)\ID + \beta \Diag + \laplacian)^{-1}(\mdtwo{\finop}^{(t)} + \beta \left(\ID + \Diag \right) \zeta^{(0)}).
    \end{align*}

    We again apply \cref{lem:support:convergence} to check whether $\mdtwo{\finop}^{(\infty)} = \lim_{t\rightarrow \infty}\mdtwo{\finop}^{(t)}$ exists and obtain the value.

    To apply \cref{lem:support:convergence}, we set 
    $\cH = ((1 + \beta) \ID + \beta \Diag + \laplacian)^{-1}$, 
    $\cc = ((1 + \beta) \ID + \beta \Diag + \laplacian)^{-1} \beta (\ID + \Diag) \zeta^{(0)}$, 
    $\cM = (1 + \beta) \ID + \beta \Diag + \laplacian$,
    $\cN = \ID$, 
    $\cb = \beta (\ID + \Diag) \zeta^{(0)}$, and 
    $\cA =  \beta \ID + \beta \Diag + \laplacian$.

    We observe that both $\cA$ and $\cM$ are non-singular $M$-matrices, hence 
    \begin{align*}
        \mdtwo{\finop}^{(\infty)} &= \cA^{-1} \cb \\
        &= (\beta \ID + \beta \Diag + \laplacian)^{-1} \beta (\ID + \Diag) \zeta^{(0)} \\
        &= (\ID + (\beta (\ID + \Diag))^{-1}\laplacian)^{-1} \zeta^{(0)} \\
        &= \left(\ID + \frac{1}{\beta(1+d)}\laplacian\right)^{-1} \zeta^{(0)}.
    \end{align*}

    Notice that the last equality holds as the graph is a $d$-regular graph.
\end{proof}

\section{Non-stubborn Media Sources}
\label{sec:non-stubborn}
Next, we consider a single \emph{non-stubborn} media source, i.e., now the media
source participates in the FJ-dynamics like any other node. Formally, our model is as follows.
Initially, we set the innate opinion and the expressed opinion of
$M$ to be the same, i.e.,
$\ebegop_{M} = \efinop_{M}^{(0)}= (1 + \gamma)\bar{\begop}$, but now $\efinop{M}^{(t)}$ is updated based on \cref{eq:FJdynamics}. Intuitively, one would expect that here $M$ influences the other
nodes' opinions much less than in the stubborn setting. Indeed, we show that for any
(possibly non-regular) graph, the sum of expressed opinions increases by
\emph{at most} a factor of $1+\frac{1+\gamma}{n}$. This is in stark contrast to
our discussion after \cref{cor:twomediaoneroundreg}, where we argued that
for $d$-regular graphs with $d(\beta+1)\geq 1$, the sum of expressed opinions
increases by \emph{at least} a factor of $1+\Omega(\gamma)$.
\begin{proposition}
\label{prop:singlemedia-nonfixed}
	After the convergence of opinion dynamics, we obtain that
	$\one^\intercal\md{\finop}^*
		\leq \left(1 + \frac{1+\gamma}{n}\right) \one^\intercal \begop.$
\end{proposition}
\begin{proof}
	By \cref{cor:suminitialsumfinal}, the sum of the expressed opinions of all nodes,
	including the external source, is equal to the sum of innate opinions of
	all nodes, namely $\one^\intercal\md{\finop}^* + \md{z}_{M}^* =
	\one^\intercal \begop + s_{M}$.  Re-arranging the equation,
	$\one^\intercal\md{\finop}^* = \one^\intercal \begop + s_{M} -
	\md{z}_{M}^*$.  As $\md{z}_{M}^* \geq 0$, an upper bound on the sum
	of expressed opinions is given by $\one^\intercal\md{\finop}^* \leq
	\one^\intercal \begop + s_{M}$.  Plugging $s_{M} \leq (1+\gamma)\one^\intercal \begop/n$ into the formula, we obtain 
	\begin{align*}
		\one^\intercal\md{\finop}^* &\leq \one^\intercal \begop + (1+\gamma) \one^\intercal \begop/n 
		= \left(1 + \frac{1+\gamma}{n}\right) \one^\intercal \begop.
		\qedhere
	\end{align*}
\end{proof}

\section{Experimental Results}

\begin{figure*}[ht!]
    \centering
    \begin{subfigure}{0.18\textwidth}
     \captionsetup{font=scriptsize}
        \includegraphics[scale=0.22]{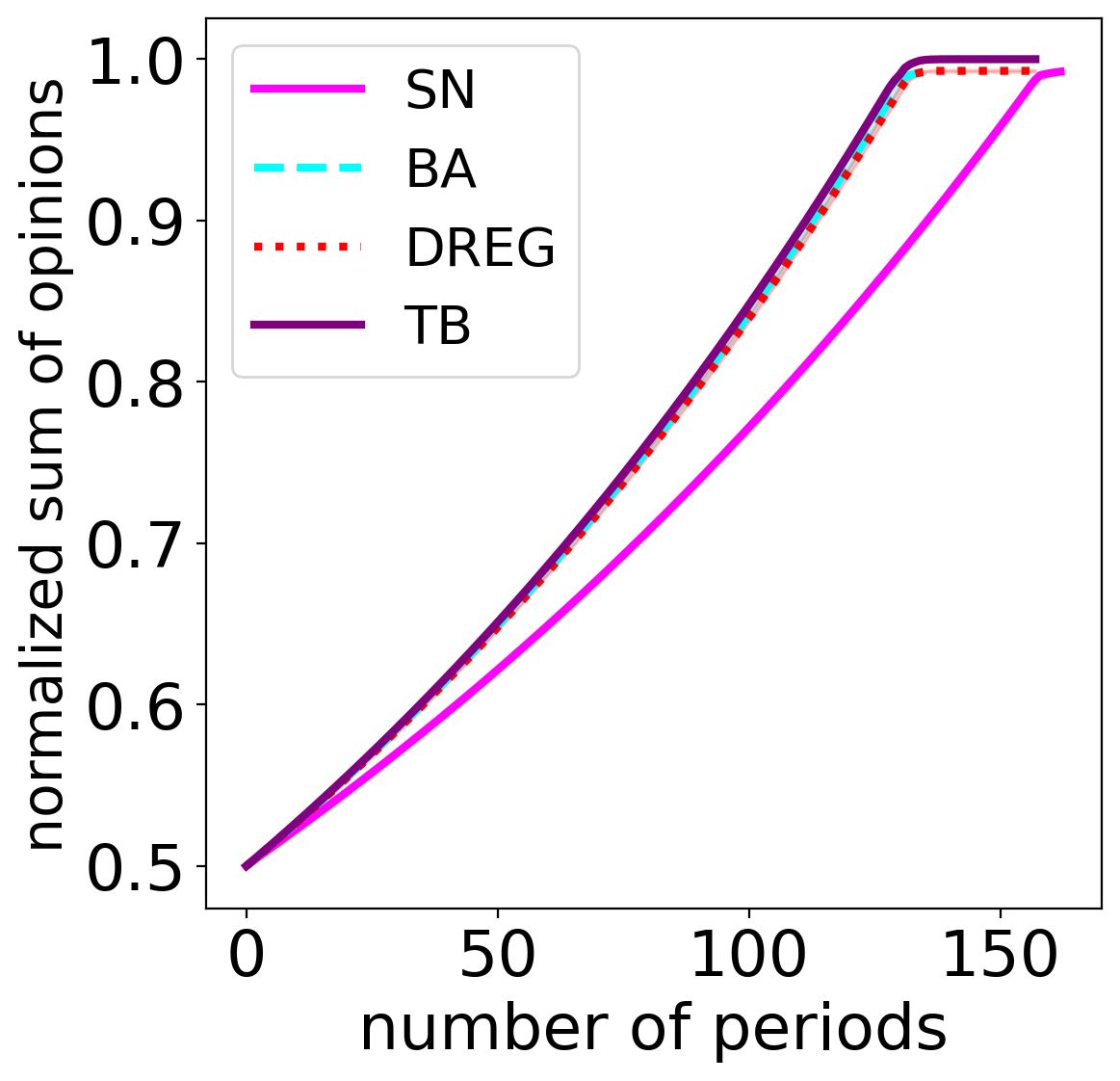}
\caption{$\alpha=1,\gamma=0.01,\beta=0.025$}
        \label{fig:EX1FB}
    \end{subfigure}
    \begin{subfigure}{0.18\textwidth}
    \captionsetup{font=scriptsize}
        \includegraphics[scale=0.22]{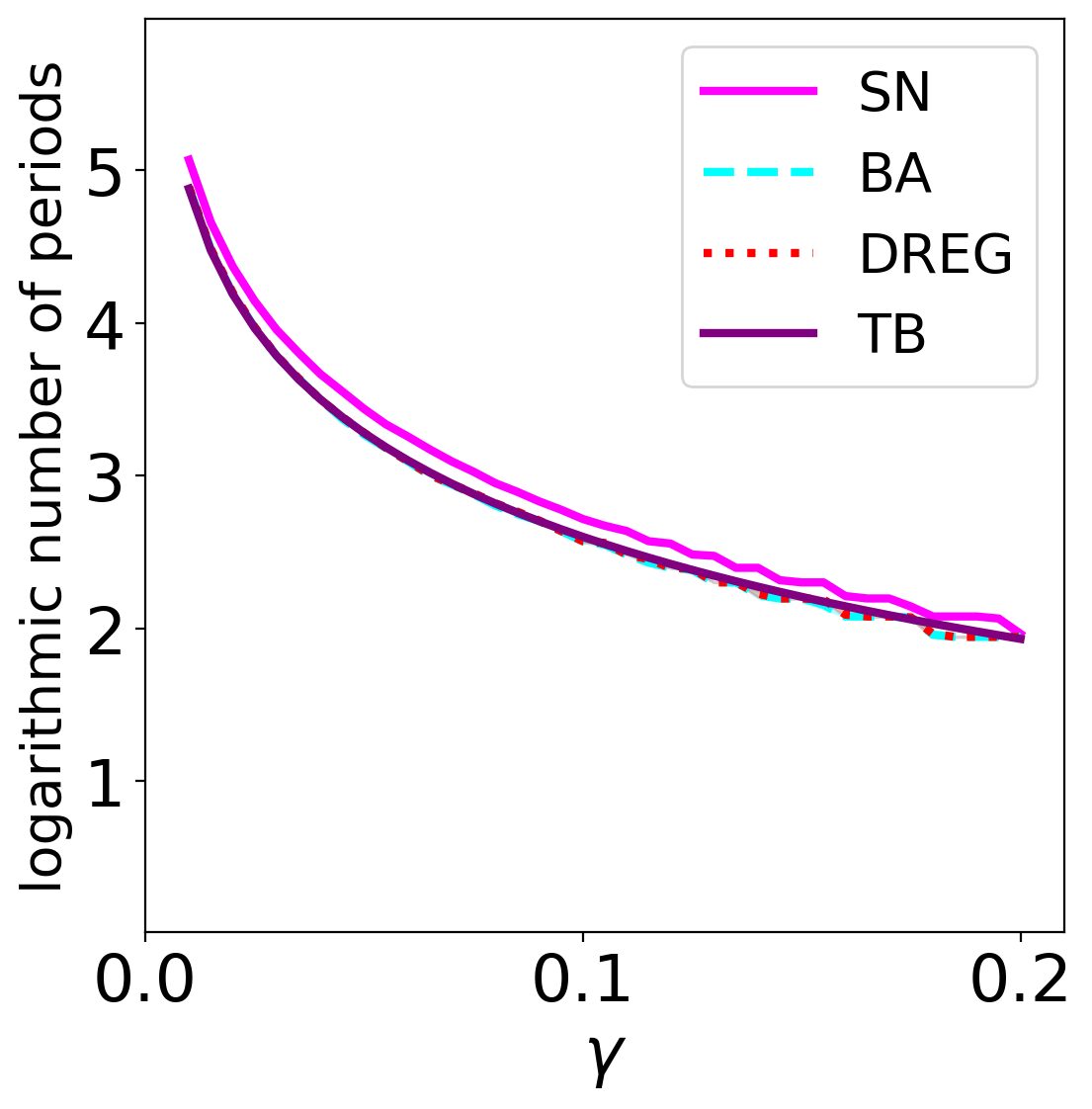}
        \caption{$\alpha=1, \beta=0.025$}
        \label{fig:EX2FB}
    \end{subfigure}
    \begin{subfigure}{0.18\textwidth}
    \captionsetup{font=scriptsize}
        \includegraphics[scale=0.22]{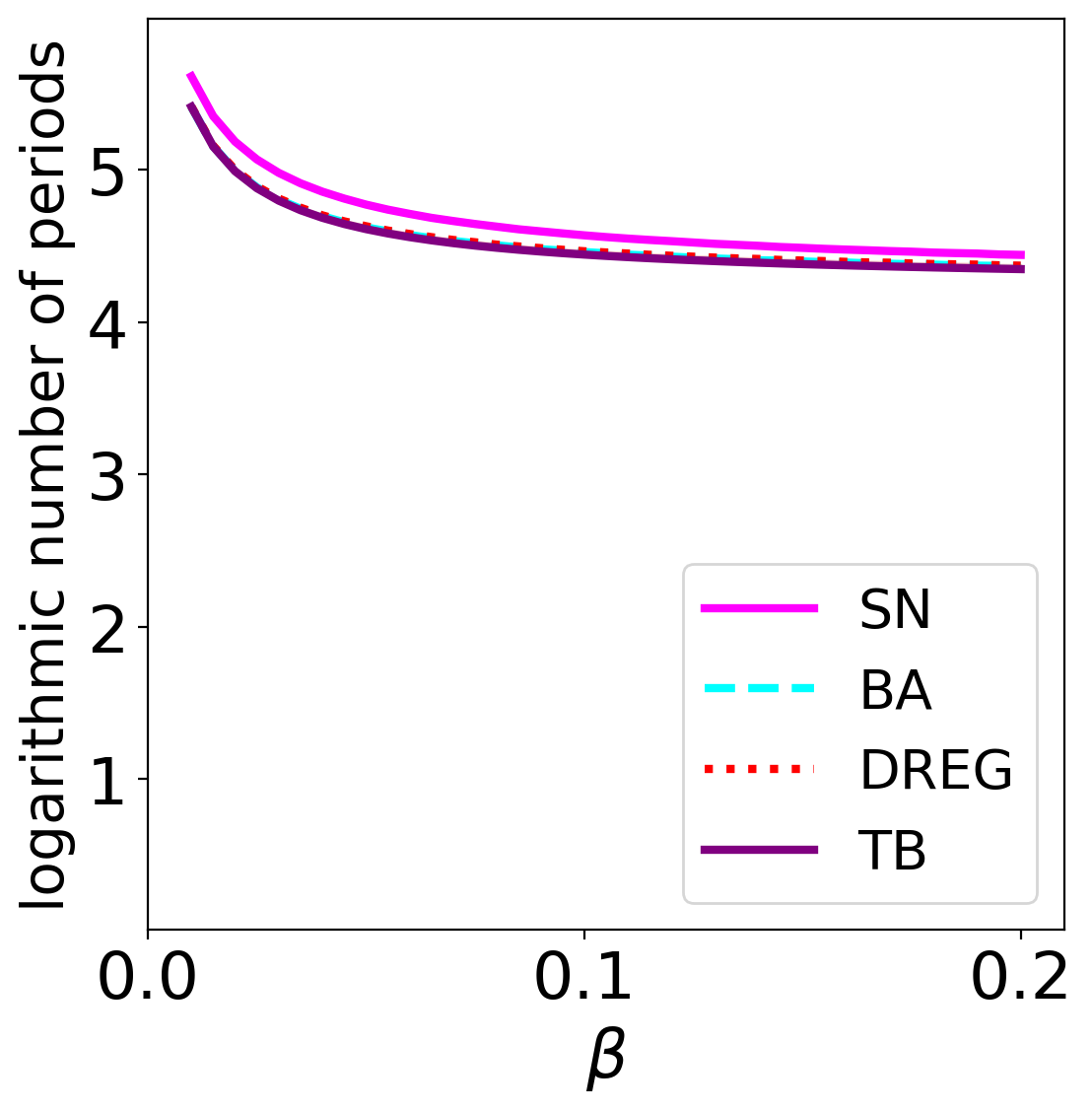}
        \caption{$\alpha=1, \gamma=0.01$}
        \label{fig:EX3FB1}
    \end{subfigure}
    \begin{subfigure}{0.18\textwidth}
    \captionsetup{font=scriptsize}
        \includegraphics[scale=0.22]{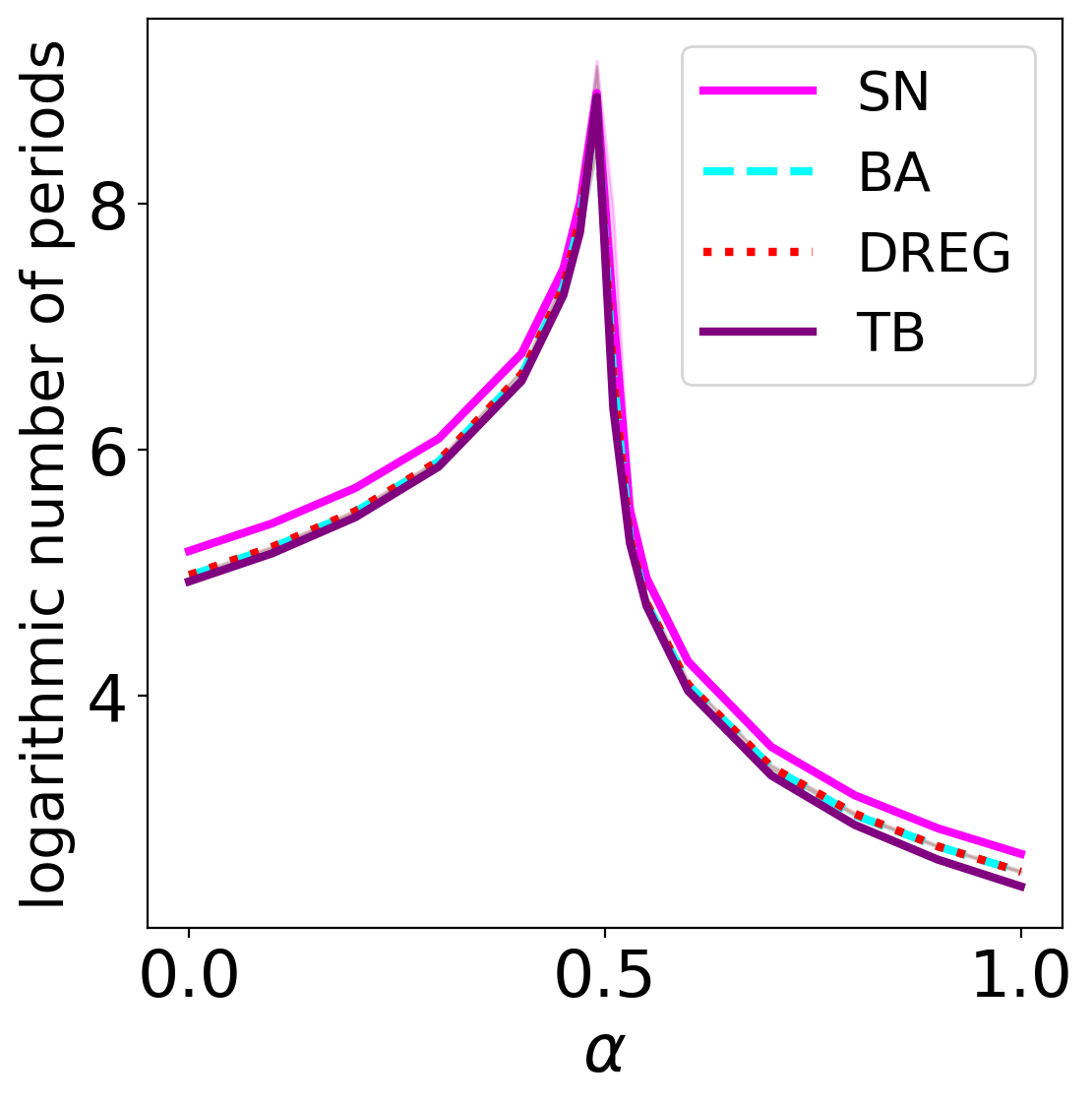}
        \caption{$\gamma=0.1$, $\beta=0.025$}
        \label{fig:EX4FB_T}
    \end{subfigure}
    \begin{subfigure}{0.18\textwidth}
    \captionsetup{font=scriptsize}
        \includegraphics[scale=0.22]{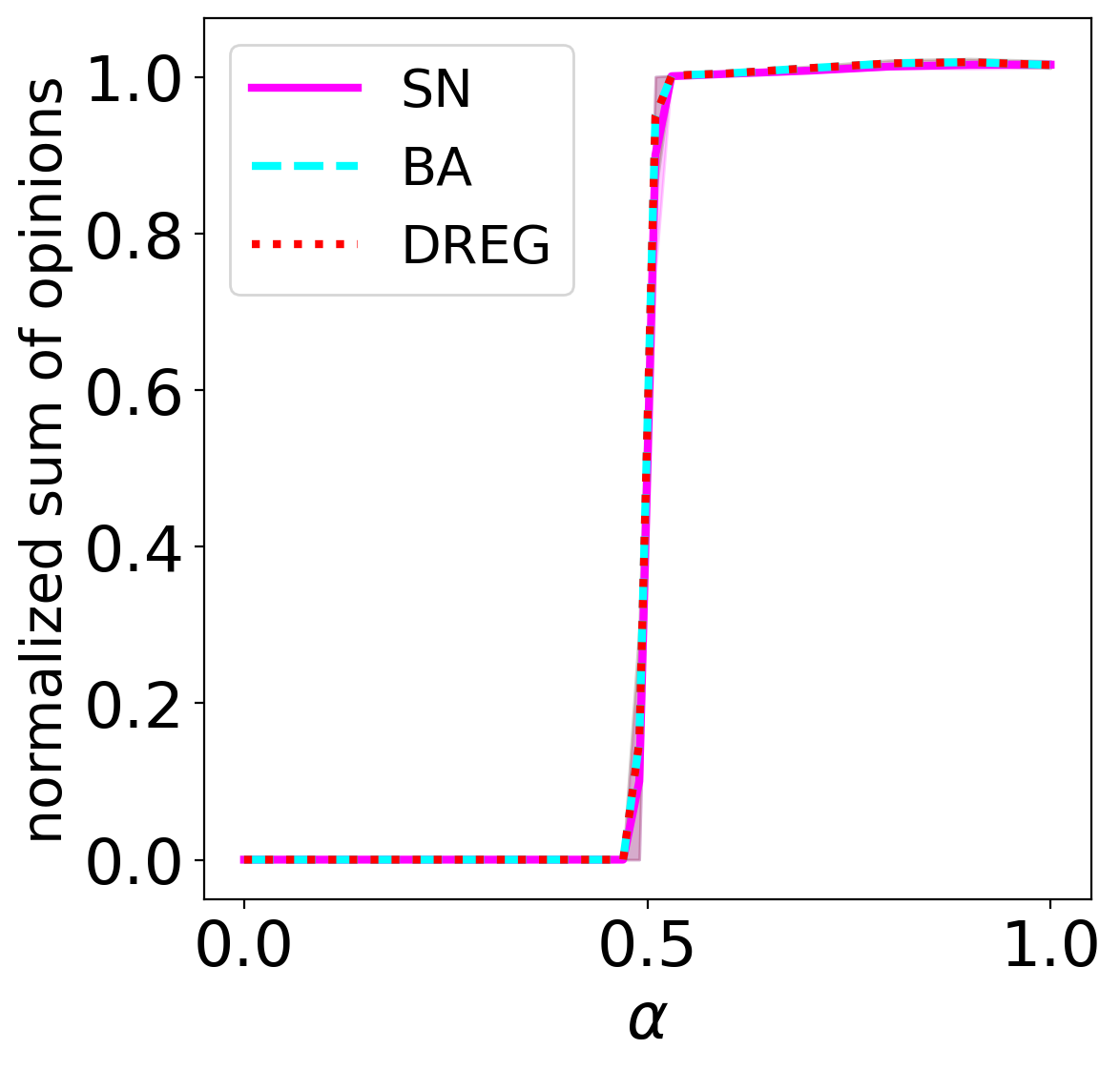}
        \caption{$\gamma=0.1, \beta=0.025$}
        \label{fig:EX4FB_norm}
    \end{subfigure}\\
    \begin{subfigure}{0.18\textwidth}
    \captionsetup{font=scriptsize}
        \includegraphics[scale=0.22]{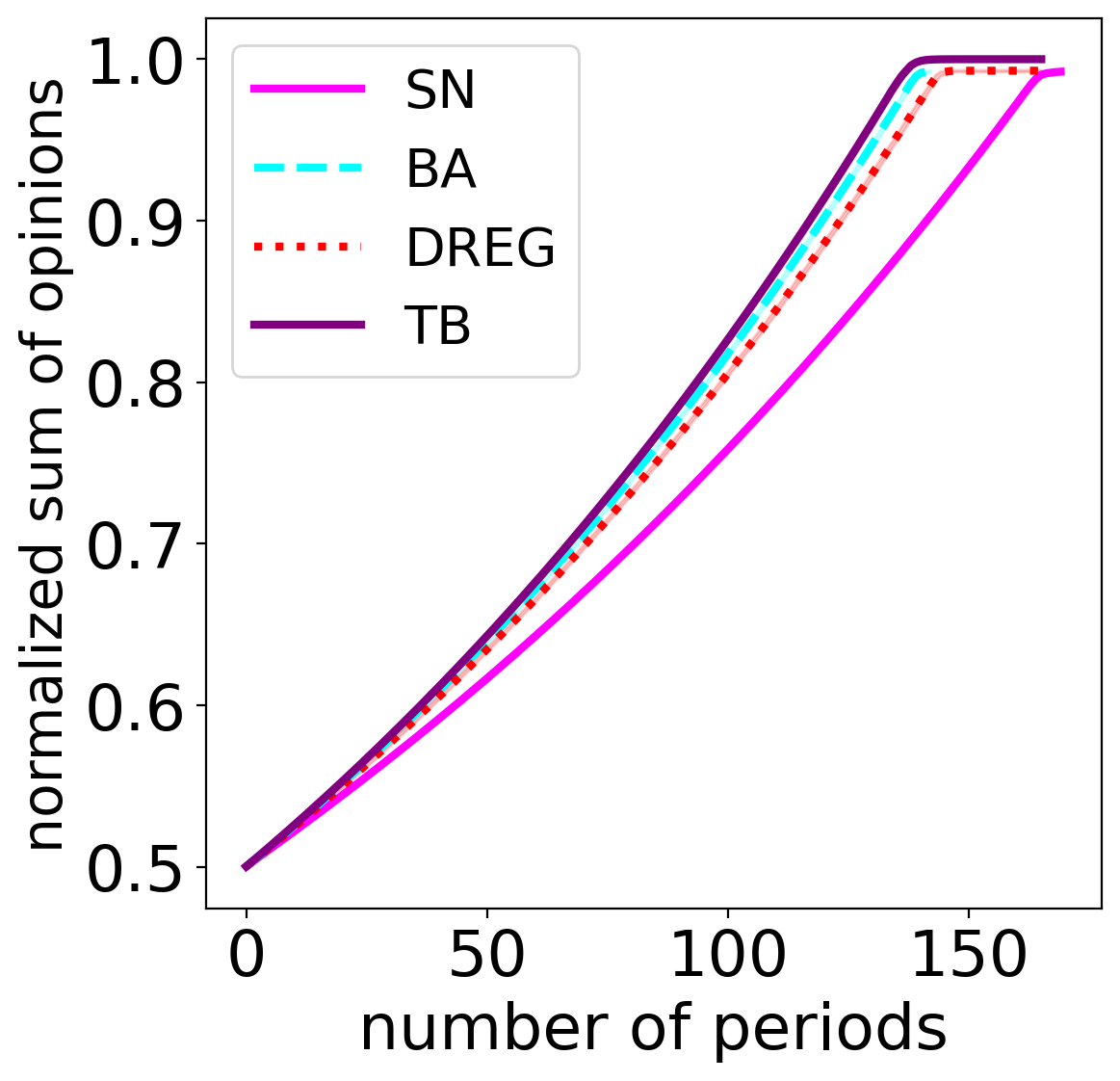}
        \caption{$\alpha = 1, \gamma=0.01, \beta=0.035$}
        \label{fig:EX1WK}
    \end{subfigure}
    \begin{subfigure}{0.18\textwidth}
    \captionsetup{font=scriptsize}
        \includegraphics[scale=0.22]{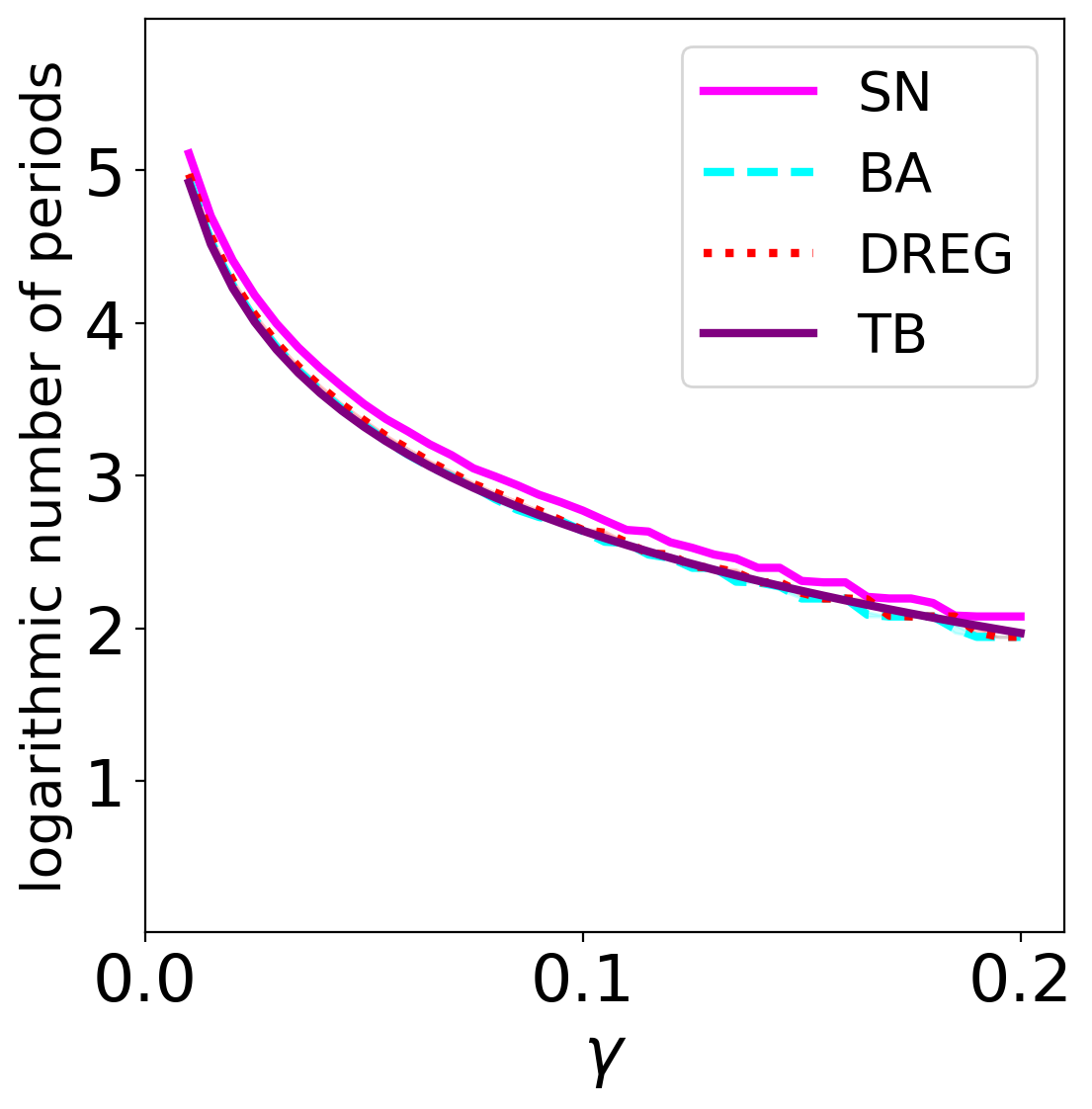}
        \caption{$\alpha = 1, \beta=0.0355$}
        \label{fig:EX2WK}
    \end{subfigure}
    \begin{subfigure}{0.18\textwidth}
    \captionsetup{font=scriptsize}
        \includegraphics[scale=0.22]{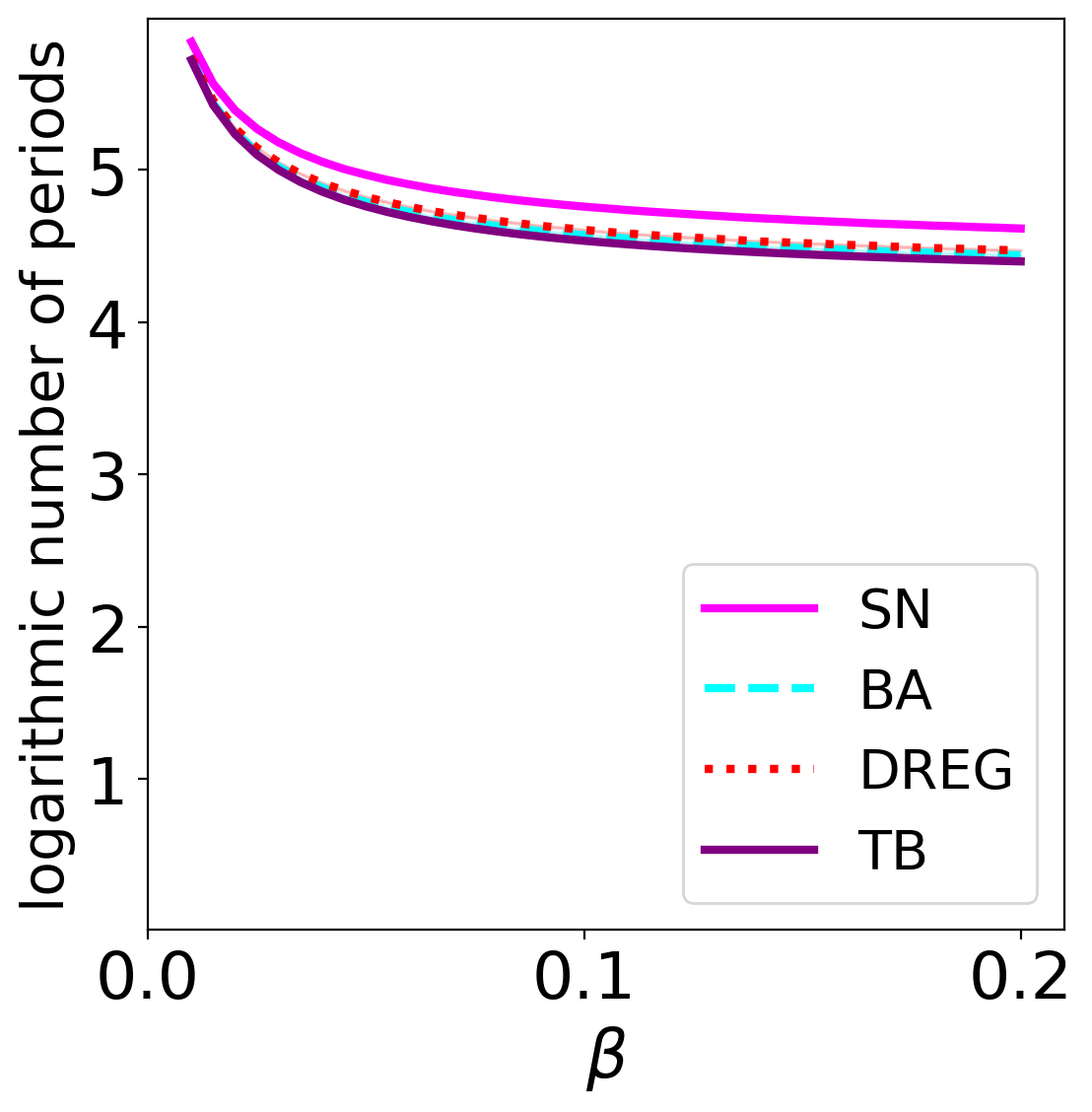}
        \caption{$\alpha=1$,$\gamma=0.01$}
        \label{fig:EX3WK}
    \end{subfigure}
    \begin{subfigure}{0.18\textwidth}
    \captionsetup{font=scriptsize}
        \includegraphics[scale=0.22]{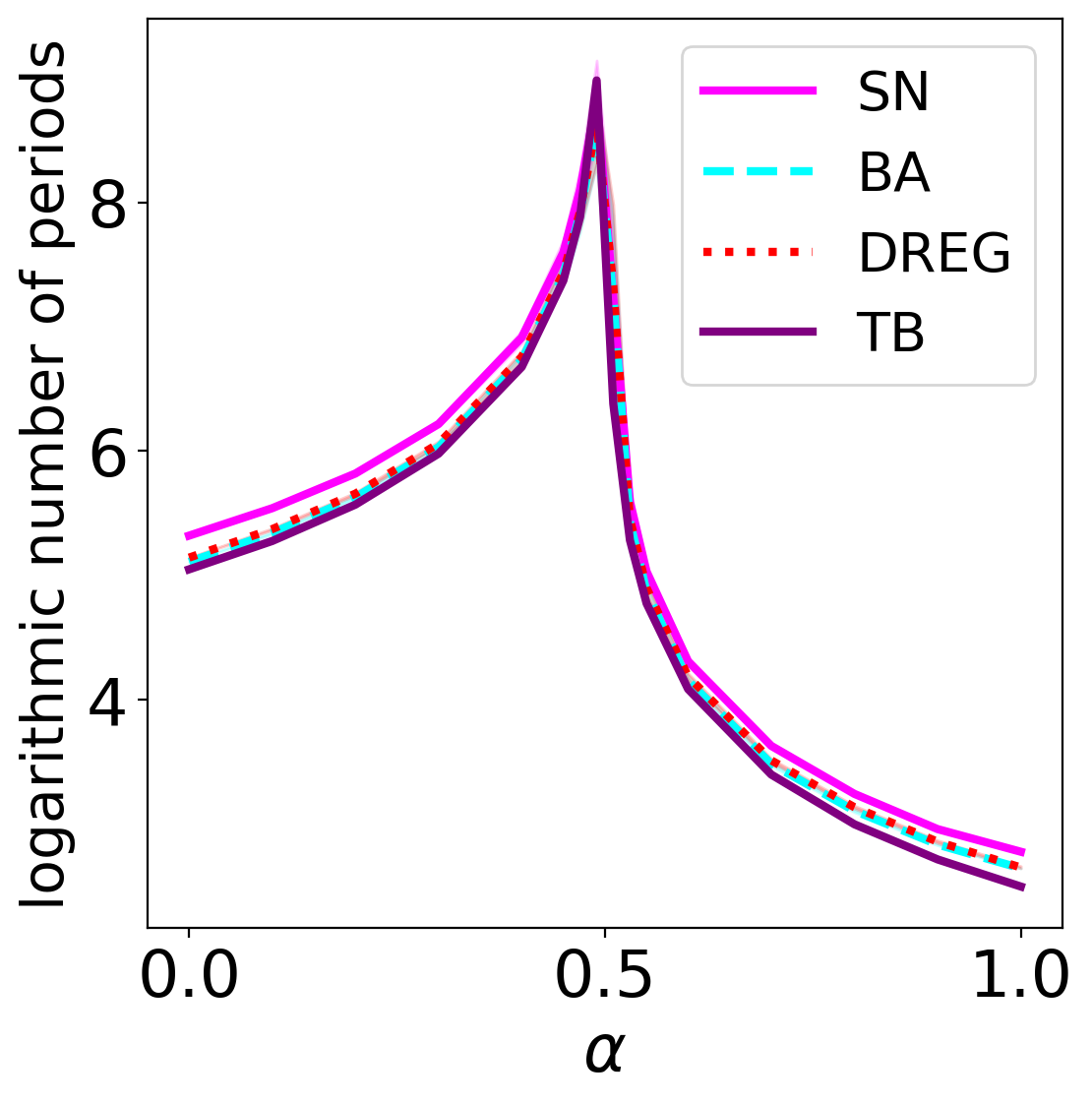}
        \caption{$\gamma=0.1, \beta=0.035$}
        \label{fig:EX4WK_T}
    \end{subfigure}
    \begin{subfigure}{0.18\textwidth}
    \captionsetup{font=scriptsize}
        \includegraphics[scale=0.22]{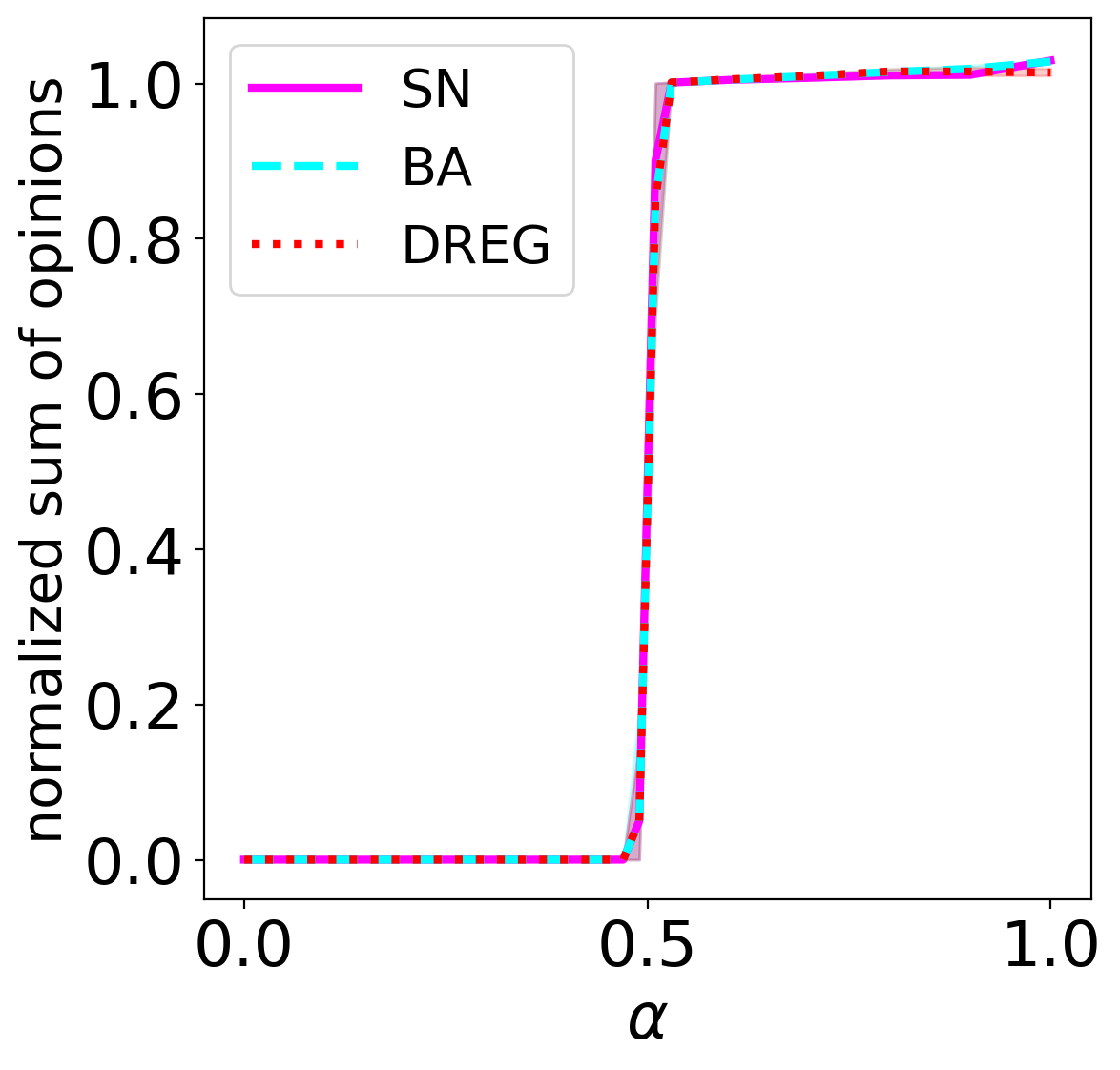}
        \caption{$\gamma=0.1, \beta=0.035$}
        \label{fig:EX4WK_norm}
    \end{subfigure}
    \caption{
    In $(a), (b), (c), (d)$ and $(e)$ we consider the FB SN and the BA and DREG graph with comparable parameters to FB. In $(f),(g),(h),(i)$ and $(j)$ we consider the WK SN and the BA and DREG graph with comparable parameters to WK. In $(a)$ and $(f)$ the normalized sum of expressed opinions over multiple periods is depicted. The logarithmic number of periods to reach normalized average opinion $1/(1 + \gamma)$ is depicted in $(b)$ and $(g)$ for different values of $\gamma$ and in $(c)$ and $(h)$ for different values of $\beta$. In $(d)$ and $(i)$ the logarithmic number of periods to reach normalized average opinion $1/(1+\gamma)$ or $\epsilon := 10/n$ for different values of $\alpha$ are shown. Lastly, in $(e)$ and $(j)$ the final normalized sum of opinions at the end of the process for different values of $\alpha$ are given. The graph TB depicts the theoretical bound from \cref{cor:twomediaoneroundreg} in $(a)$ and $(f)$ and \cref{lem:numroundsphase1multimedia} for $(b),(g),(c),(h)$ and $(d)$ and $(i)$.}
    \label{fig:experimentalresults}
\end{figure*}

Next, we run experiments to validate how well our theoretical bounds match the behavior in on real-world graph data and synthetic graph models.

\subsection{Setup}

\para{Real-world Networks.} For our experiments, we use publicly available Social Network (SN) data from \cite{snapnets}. Our experiments were conducted on Facebook SN ($4039$ nodes and $88234$ edges) and Wikipedia SN ($7115$ nodes and $103689$ edges) datasets; we abbreviate them as FB and WK, respectively.

\para{Synthetic Graphs.} We also conducted experiments on synthetic graphs, namely Barabási-Albert (BA) graphs and $d$-regular random graphs (DREG), i.e., random graphs with a uniform distribution over all $d$-regular graphs on $n$ nodes. We include the BA graph as it is a model to simulate real-world SNs and the DREG since we provide theoretical results, particularly for regular graphs. The parameters in these synthetic graphs were chosen such that they are comparable to the real-world SNs, i.e., such that the (expected) number of nodes/edges is the same as in the aforementioned real-world networks. For example, the DREG graph comparable to the FB SN has $4039$ nodes and degree $44 \approx (2\times 88234)/4039$. All edges in these networks are of weight $1$, except from edges connected to the media source (edge $(i,M)$ has weight $\beta(1+d_i)$ similarly as in the theoretical setup).
 
\para{Innate Opinions.} In our experiments, the innate opinions are chosen from a Gaussian distribution with mean $0.5$ and variance $0.2$. We choose the mean in this way such that on average the innate opinions are not biased towards $0$ or $1$. There is nothing unique about our choice for the variance, and our results would hold for different values of the variance as well. 

\para{Theoretical Bound.} When reporting our results, we sometimes include a plot corresponding to one of our theoretical results; we denote this plot by Theoretical Bound $(TB)$.

 \para{Implementation.} To compute $\md{\finop}^*$ as in \cref{eq:multiplemediazstar}, we rely on the algorithm of \cite{gao2023robust} and its implementation in \hyperlink{https://github.com/danspielman/Laplacians.jl}{Laplacians.jl}. To generate the synthetic graphs, we rely on the implementations in \cite{Graphs2021}. Furthermore, our experiments are implemented in Julia and our code is available in the supplementary material.

 \para{Repetitions. }Each experiment is repeated $20$ times. In the plots in \cref{fig:experimentalresults} the average output with confidence intervals are depicted (note that the repetitions are highly concentrated). In the plots in \cref{fig:experimentalresultsalphahalf}, each one of the $20$ repetitions is plotted individually.

\vspace{-1em}
\subsection{Findings}

\para{Results for a Single Media Source.} In line with \cref{thm:twomediaoneround}, \cref{fig:EX1FB} and \cref{fig:EX1WK} show that when all nodes are connected to a single source $M$ $(\alpha=1)$ the normalized sum of opinions converges to a value arbitrarily close to $1$. Interestingly, the BA graph also follows the theoretical bound for DREG (depicted in purple) quite closely in both networks. \cref{fig:EX2FB}, \cref{fig:EX2WK} and \cref{fig:EX3FB1} \cref{fig:EX3WK} indicate that both bias parameter $\beta$ and external influence magnitude $\gamma$ are negatively correlated with the number of periods to reach normalized average opinion $1/(1 +\gamma)$ (see also \cref{lem:numroundsphase1multimedia}). However, the dependence on $\beta$ appears to be less significant, as we previously observed in \cref{cor:twomediaoneroundreg} for the regular graph.

\begin{figure*}[h]
    \centering
    \begin{subfigure}{0.23\textwidth}
     \captionsetup{font=scriptsize}
        \includegraphics[scale=0.25]{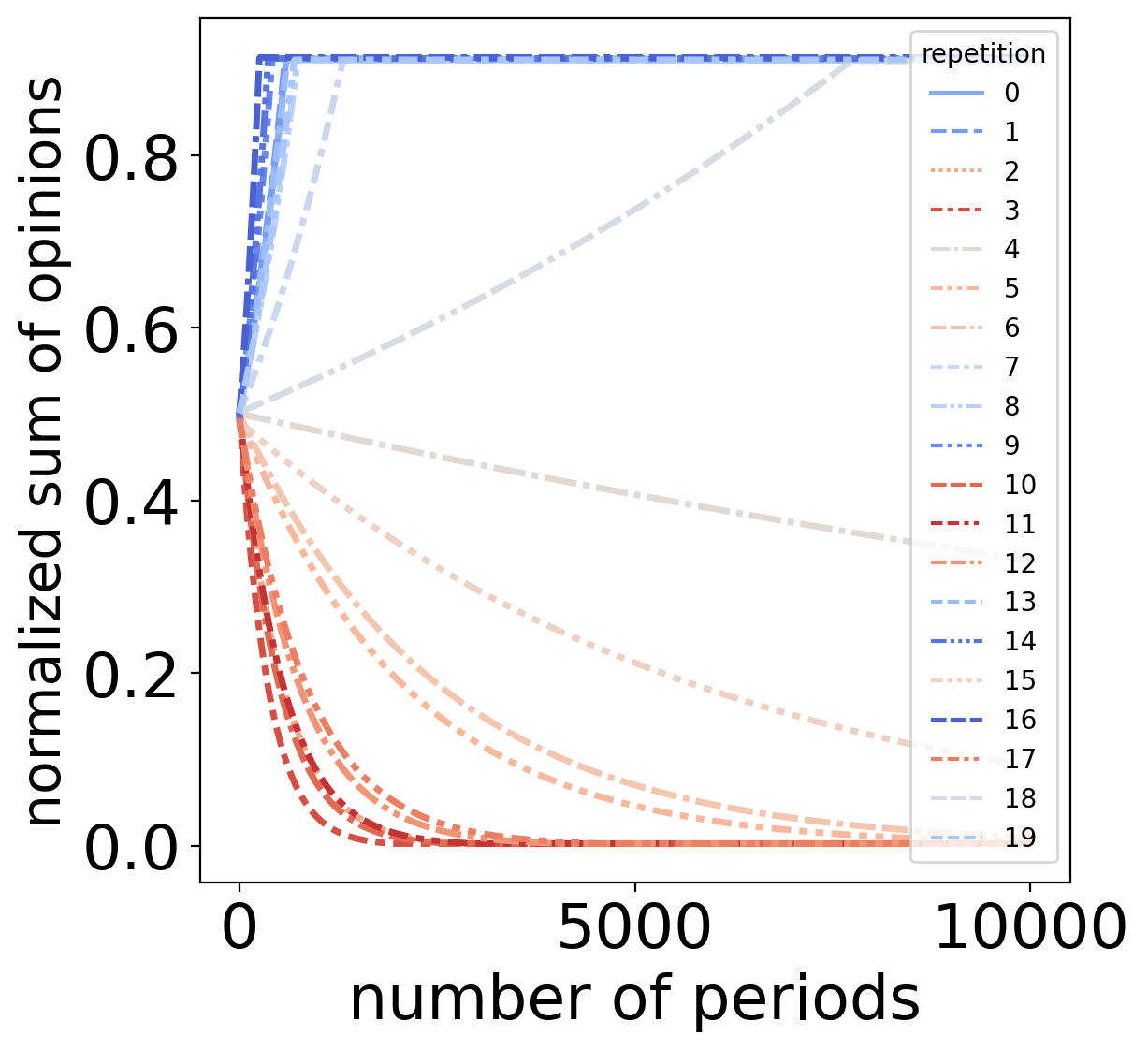}
\caption{$\alpha=1,\gamma=0.01,\beta=0.035$}
        \label{fig:sumNCNE_FB_SN_even}
    \end{subfigure}
    \begin{subfigure}{0.23\textwidth}
    \captionsetup{font=scriptsize}
        \includegraphics[scale=0.25]{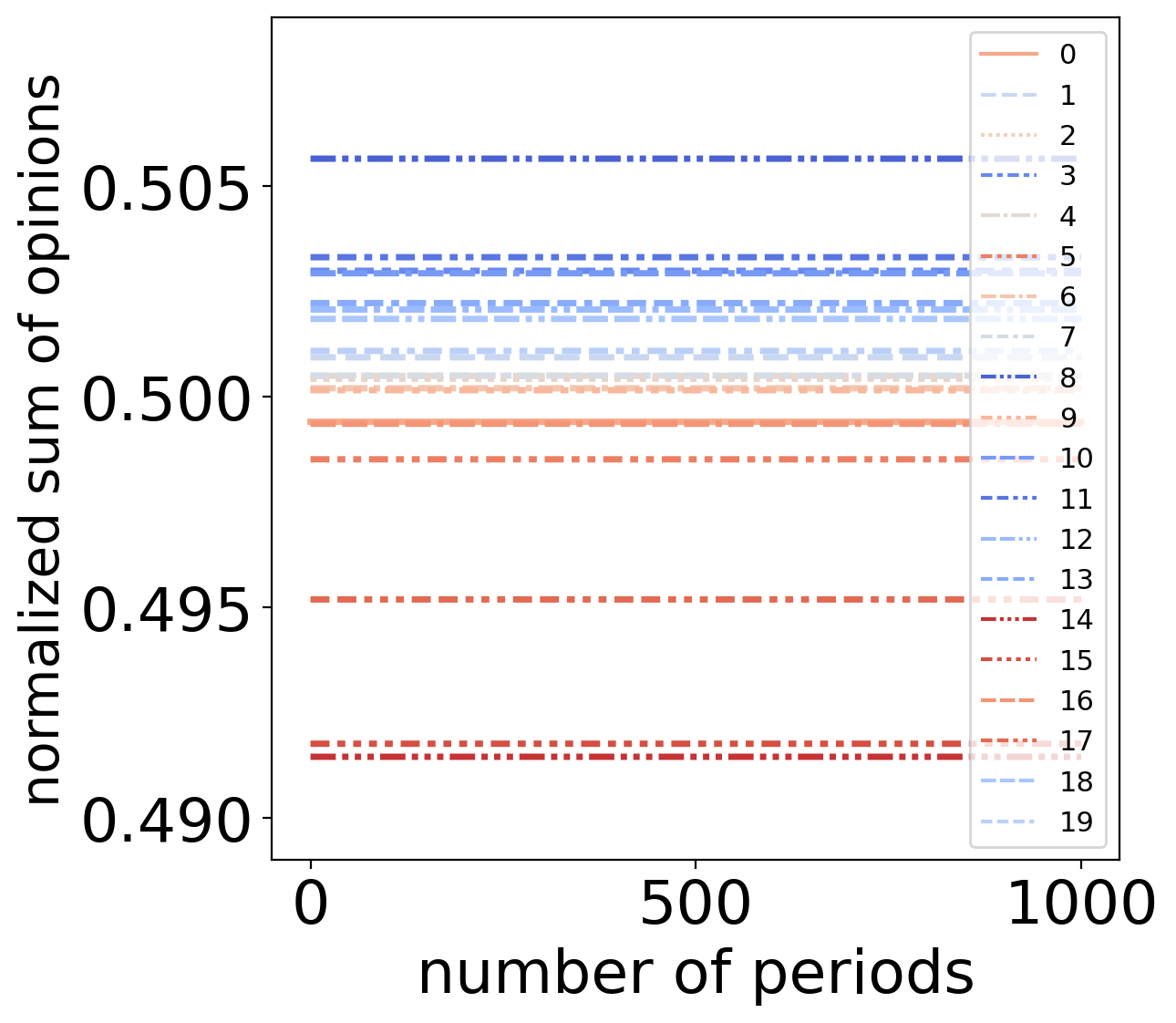}
        \caption{$\alpha=1, \beta=0.035$}
        \label{fig:sumNCNE_FB_DREG_even}
    \end{subfigure}
    \begin{subfigure}{0.23\textwidth}
    \captionsetup{font=scriptsize}
        \includegraphics[scale=0.25]{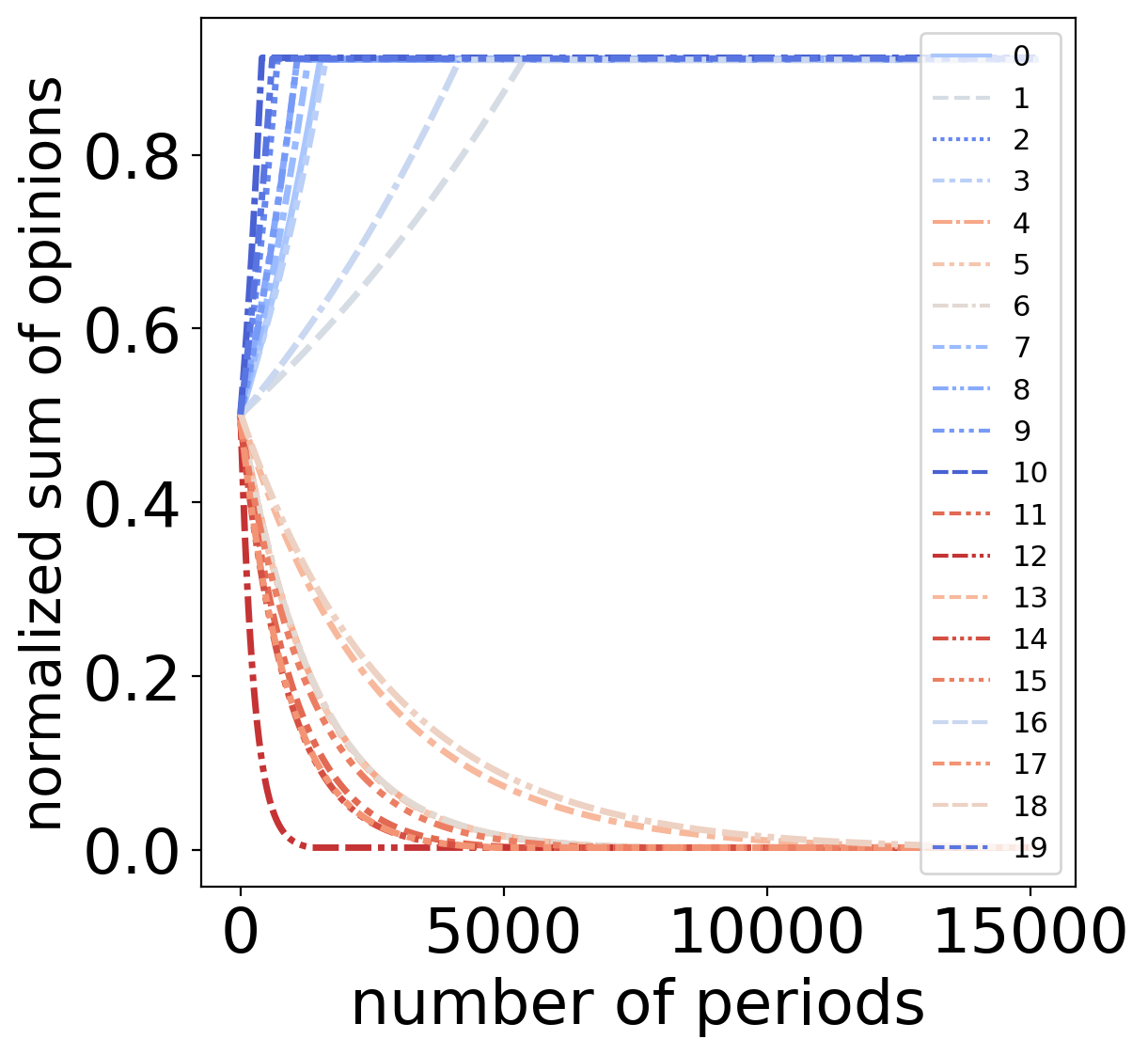}
        \caption{$\alpha=1, \gamma=0.01$}
        \label{fig:sumNCNE_FB_SN}
    \end{subfigure}
    \begin{subfigure}{0.23\textwidth}
    \captionsetup{font=scriptsize}
        \includegraphics[scale=0.25]{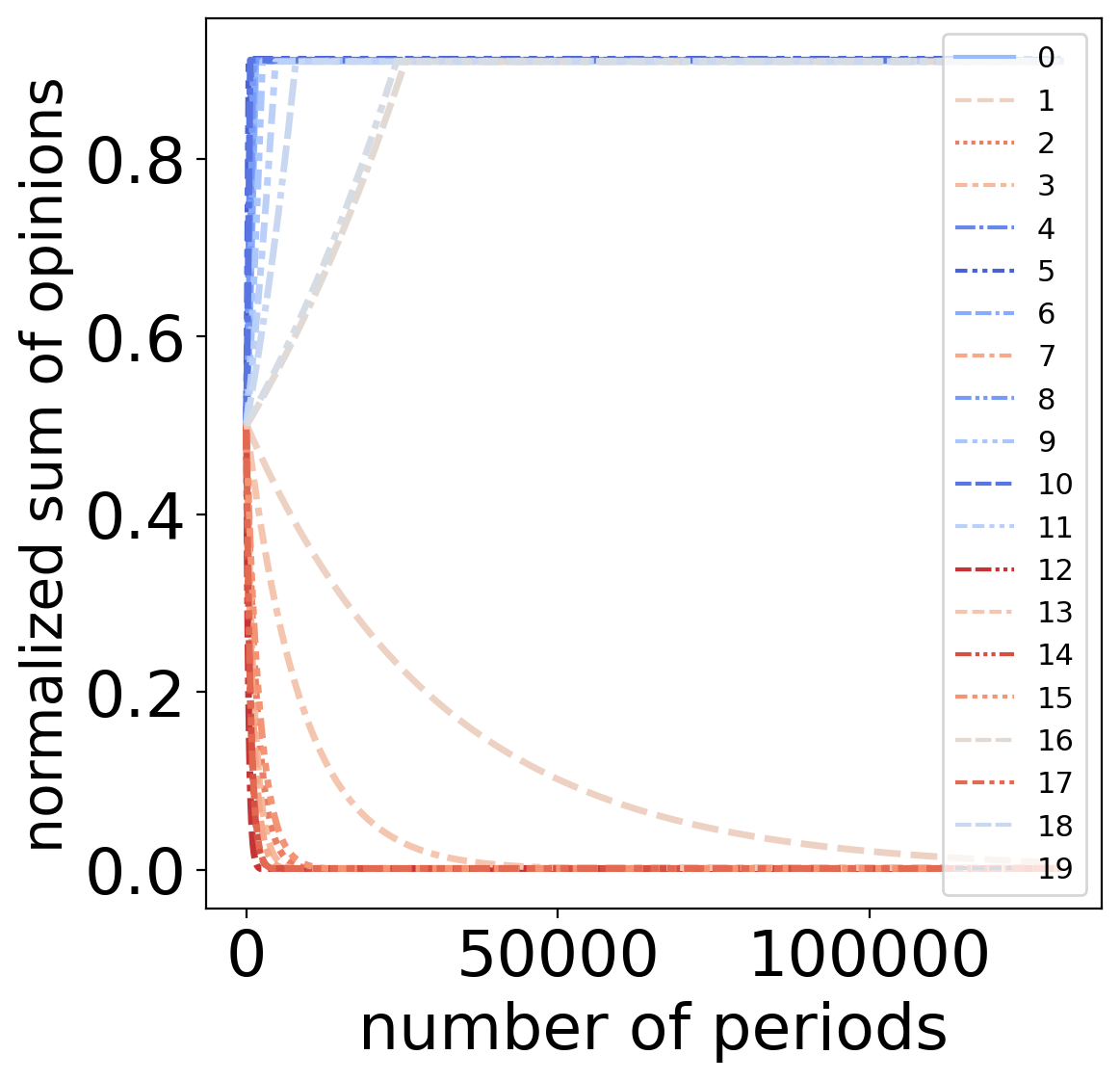}
        \caption{$\gamma=0.1$, $\beta=0.035$}
        \label{fig:sumNCNE_FB_DREG}
    \end{subfigure}
    \begin{subfigure}{0.23\textwidth}
    \captionsetup{font=scriptsize}
        \includegraphics[scale=0.25]{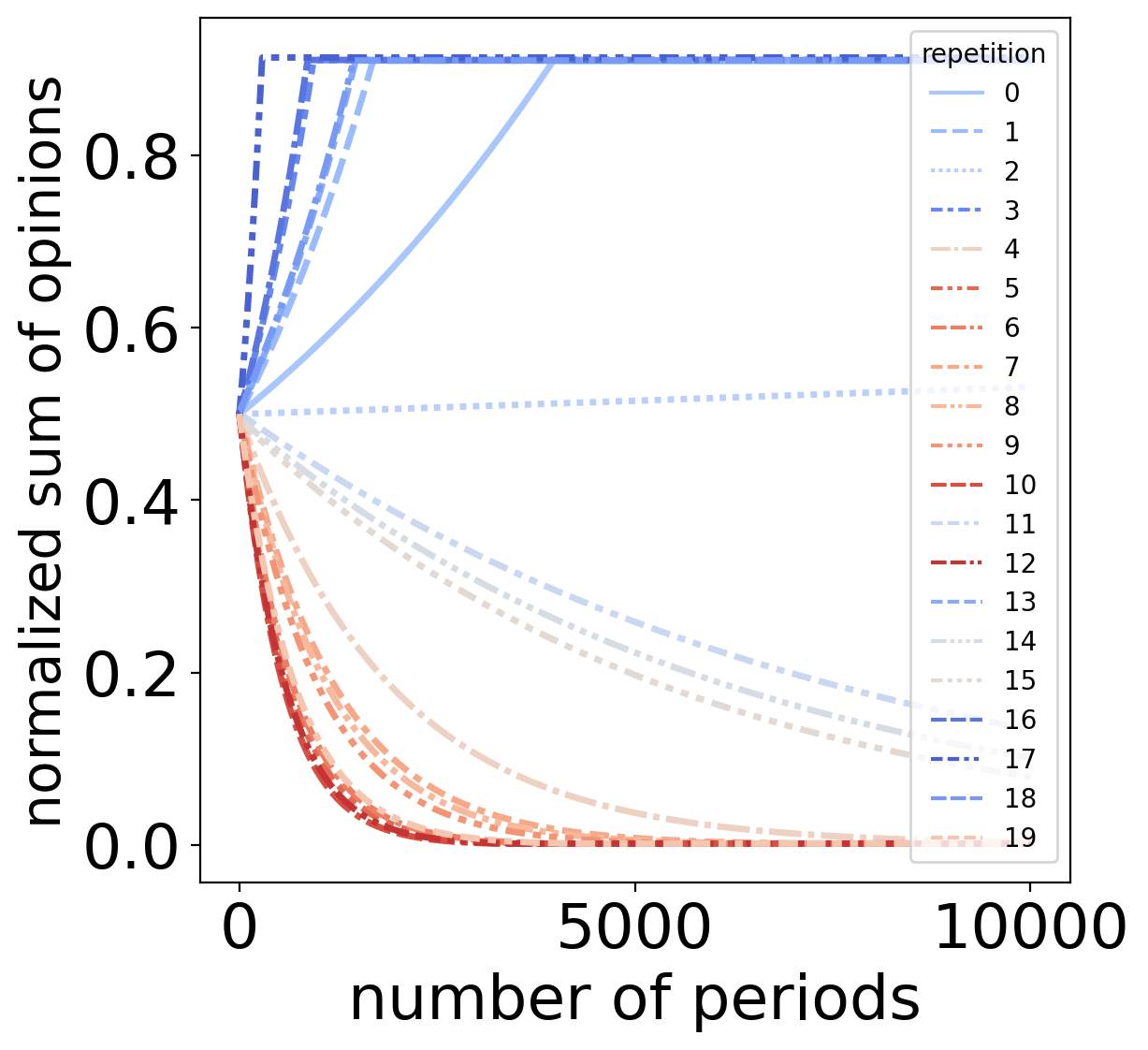}
        \caption{$\alpha = 0.5, \gamma=0.1, \beta=0.5$}
        \label{fig:sumNCNE_WK_SN_even}
    \end{subfigure}
    \begin{subfigure}{0.23\textwidth}
    \captionsetup{font=scriptsize}
        \includegraphics[scale=0.25]{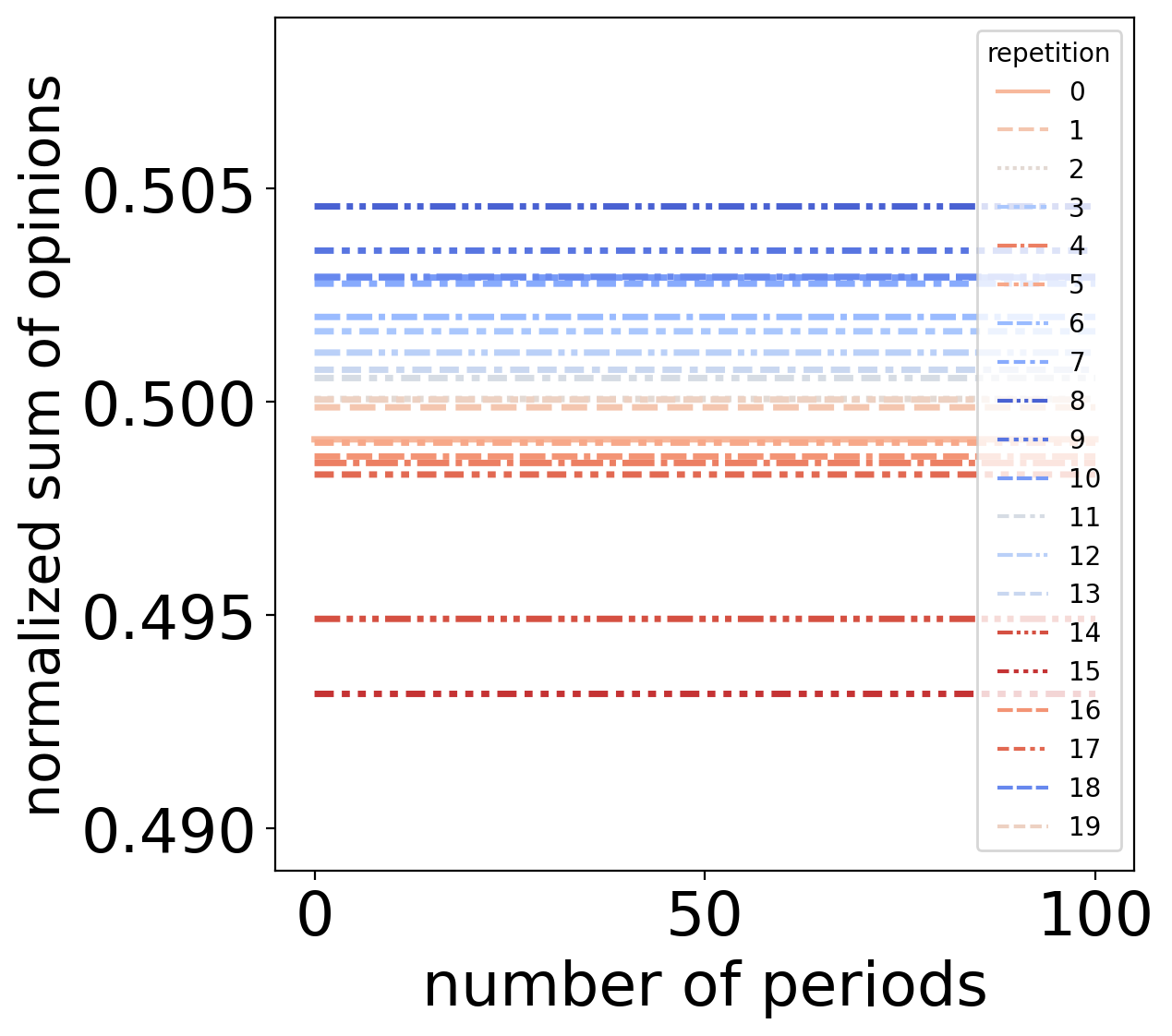}
        \caption{$\alpha = 0.5, \gamma=0.1, \beta=0.5$}
        \label{fig:sumNCNE_WK_DREG_even}
    \end{subfigure}
    \begin{subfigure}{0.23\textwidth}
    \captionsetup{font=scriptsize}
        \includegraphics[scale=0.25]{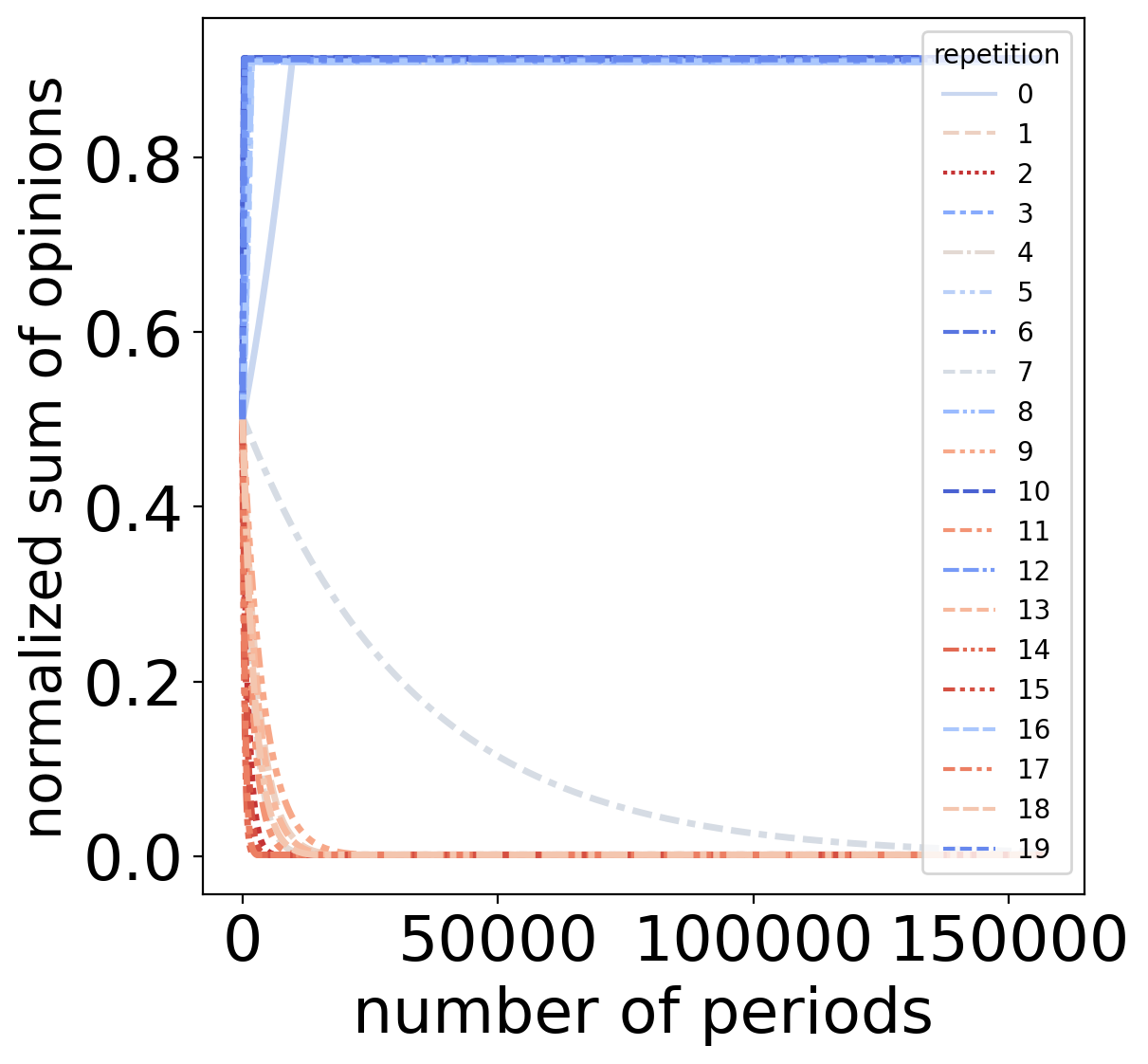}
        \caption{$\alpha \approx 0.5, \gamma=0.1, \beta=0.5$}
        \label{fig:sumNCNE_WK_SN}
    \end{subfigure}
    \begin{subfigure}{0.23\textwidth}
    \captionsetup{font=scriptsize}
        \includegraphics[scale=0.25]{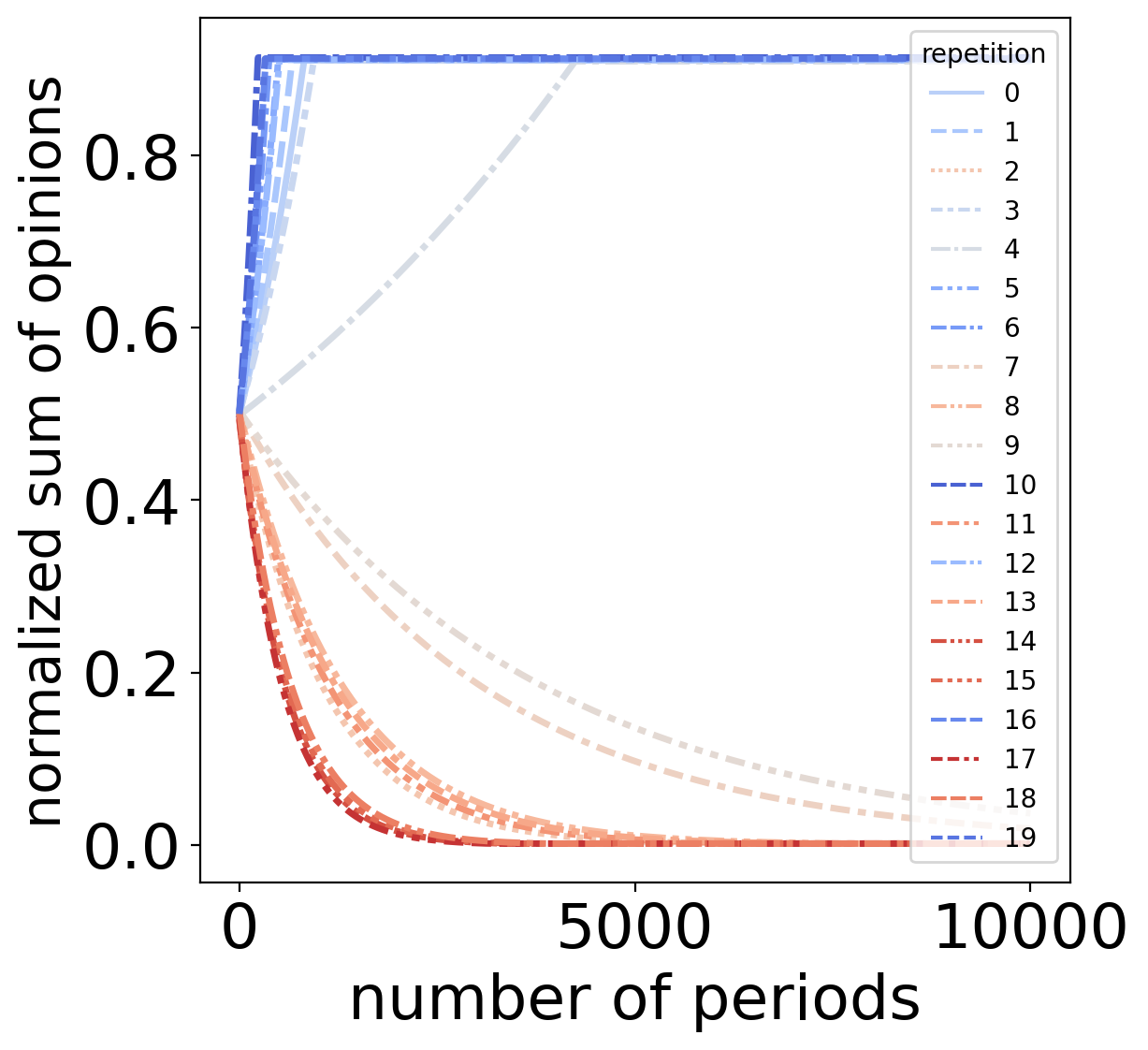}
        \caption{$\alpha \approx 0.5, \gamma=0.1, \beta=0.5$}
        \label{fig:sumNCNE_WK_DREG}
    \end{subfigure}
    \caption{
     In all the above figures, we plot the normalized sum of opinions when $\alpha =0.5$ or $\alpha\approx 0.5$. Each one of the $20$ repetitions is plotted individually. In $(a)$ and $(e)$ we consider the FB and WK SN respectively with a randomly chosen node removed. Figures $(b)$ and $(f)$ correspond to DREG graphs with $n=4038$, $d=44$ and $n=7114$, $d=30$ respectively (note that these are comparable parameters to the FB and TW SN with a single node removed, which is the same across repetitions). In $(c)$ and $(g)$ we depict the FB and WK SN respectively. Lastly, in $(d)$ and $(h)$ DREG graphs with comparable parameters to FB and WK SN respectively are shown.}
    \label{fig:experimentalresultsalphahalf}
\end{figure*}

\para{Results for Multiple Media Sources.} In \cref{fig:EX4FB_T} and \cref{fig:EX4WK_T} the logarithm of the number of periods it takes to reach normalized average opinion $1/(1 +\gamma)$ for different values of $\alpha$ is depicted. Consistent with \cref{lem:numroundsphase1multimedia}, we observe that the process takes the longest for values of $\alpha$ close to $0.5$ in both networks. The asymmetry in the plot is due to the nature of our model (from $n/2$, it takes less time to reach $n/(1+\gamma)$ by increasing with a factor of $(1+\gamma)$ than to reach a very small constant by decreasing by a factor of $1-\gamma$). In \cref{fig:EX4FB_norm} and \cref{fig:EX4WK_norm} we observe that the normalized sum of opinions converges for values of $\alpha < 0.5$ to $0$ and for $\alpha > 0.5$ to $1$, as previously proven for the regular graphs in \cref{cor:twomediaoneroundreg}.

Next, we further investigate the behavior at the threshold value $\alpha=0.5$. As the number of nodes in FB and WK SN are $4039$ and $7115$ respectively, and thus odd (so $\alpha$ cannot be exactly $0.5$ in this case), we delete one node from these graphs uniformly at random to achieve $\alpha=0.5$. Moreover, we generate DREG graphs on $4038$ nodes with $d=44$ (the same average degree as FB SN) and $7114$ and $d=30$ (same average degree as WK SN).

The normalized sum of opinions over multiple periods for these graphs are depicted in \cref{fig:sumNCNE_FB_SN_even}, \cref{fig:sumNCNE_WK_SN_even} (SN's with a randomly chosen deleted node) and \cref{fig:sumNCNE_FB_DREG_even}, \cref{fig:sumNCNE_WK_DREG_even} (DREG graphs with comparable parameters to FB and WK SN with a single removed node). We observe that in line with \cref{lemma:twomediaoneroundalphahalf}, the sum of expressed opinions stays the same in the DREG graphs on $4038$ and $7114$ nodes but on the FB and WK SN converges to either $0$ (this happens in iterations whose graphs are colored red in \cref{fig:sumNCNE_FB_SN_even}, \cref{fig:sumNCNE_WK_SN_even}) or to $1$ (this happens in iterations whose graphs are colored blue in \cref{fig:sumNCNE_FB_SN_even} and \cref{fig:sumNCNE_WK_SN_even}). Lastly, we perform the same experiment on the actual FB and WK SN and DREG graphs with comparable parameters to FB and WK SN as depicted in \cref{fig:sumNCNE_FB_SN} and \cref{fig:sumNCNE_WK_SN} and \cref{fig:sumNCNE_FB_DREG}, \cref{fig:sumNCNE_WK_DREG}. Here, due to the odd number of nodes of FB and WK SN, either $\alpha = 0.4999$ or $\alpha=0.5001$, both with probability $0.5$. We thus observe in \cref{fig:sumNCNE_FB_DREG} that by adding a single node to the DREG graph the final set of opinions also becomes radicalized, even though it takes many periods. This indicates that even a very small difference in the power of the external media source has a significant impact on the final opinion configuration.

\para{Summary.} Our experiments have shown that our simulations results and the bounds that we obtained theoretically match very well. This is interesting since some of our theoretical results were mostly derived for $d$-regular graphs, and the real-world networks are not regular. This empirically indicates that our theoretical results transfer to real-world graphs.

\vspace{-0.5em}
\section{Conclusions}
In this paper, our goal was to obtain a mathematical understanding of how external sources impact the opinion formation process in social networks. To study this formally, we proposed a generalized version of the popular FJ model and derived analytic bounds on the power of the external sources. Several of our bounds are tight for regular graphs, and we showed experimentally that our theoretical bounds closely match simulation results on real-world datasets.

In the future, it would be interesting to study the more general setup, where an external source can connect to only a given number of nodes and aims to optimize a specific objective, such as maximizing/minimizing the final sum of opinions. Another potential avenue for future research is investigating the impact of external sources on polarization in the network, especially when two external sources attempt to pull the opinions in opposite directions.

\vspace{-0.5em}
\begin{acks}
This research has been funded by the Vienna Science and Technology Fund (WWTF) [Grant ID: 10.47379/VRG23013],
the ERC Advanced Grant REBOUND (834862), and the Wallenberg AI, Autonomous Systems and Software Program (WASP) funded by the Knut and Alice Wallenberg Foundation.
\end{acks}

\bibliographystyle{ACM-Reference-Format}
\balance
\bibliography{sample-base}

\clearpage

\appendix

\end{document}